\def\lf{\tiny}
\def\nnll{\refstepcounter{linenumber}\lf\thelinenumber}
\newcounter{linenumber}
\newcommand{\commentline}[1]{\hspace{1cm}\{ \textit{#1} \}}
\newtheorem{theorem}{Theorem}[section]
\newtheorem{lemma}{Lemma}[section]
\newtheorem{property}{Property}[section]
\definecolor{heraldBlue}{rgb}{0.0,0.0,0.8}
\definecolor{heraldRed}{rgb}{0.8,0.0,0.0}
\definecolor{heraldGray}{rgb}{0.4,0.4,0.4}
\definecolor{heraldBlack}{rgb}{0.0,0.0,0.0} 
\definecolor{heraldGreen}{rgb}{0.0,0.4,0.0} 
\def\NOTES{1}
\def\SAVESPACE{0}
\titlespacing\section{0pt}{7pt}{6pt}
\newcommand{\mcnote}[1]{\textcolor{heraldBlue}{\small \bf [MC: #1]}}
\newcommand{\dlnote}[1]{\textcolor{heraldBlue}{\small \bf [DL: #1]}}
\newcommand{\ssnote}[1]{\textcolor{heraldBlue}{\small \bf [SS: #1]}}
\newcommand{\pknote}[1]{\textcolor{heraldBlue}{\small \bf [PK: #1]}}
\newcommand{\atnote}[1]{\textcolor{heraldBlue}{\small \bf [AT: #1]}}
\newcommand{\mcnote}[1]{}
\newcommand{\dlnote}[1]{}
\newcommand{\ssnote}[1]{}
\newcommand{\pknote}[1]{}
\newcommand{\atnote}[1]{}
\newcommand{\problem}[1]{}
\newcommand{\true}{\textit{true}}
\newcommand{\false}{\textit{false}}
\newcommand{\ignore}[1]{}
\newcommand{\cO}{\mathcal O}
\newcommand{\cC}{\mathcal C}
\newcommand{\cS}{\mathcal S}
\newcommand{\Lat}{\mathcal L}
\newcommand{\myparagraph}[1]{\vspace{3.5pt}\noindent \textbf{#1}}
\title{Reconfigurable Lattice Agreement and Applications}
\author[1]{Petr Kuznetsov}
\author[2]{Thibault Rieutord}
\author[2]{Sara Tucci-Piergiovanni}
\affil[1]{LTCI, T\'el\'ecom Paris, Institut Polytechnique Paris}
\affil[2]{CEA LIST, PC 174, Gif-sur-Yvette, 91191, France}
\date{}                     
\begin{document}






\maketitle




\begin{abstract}
  Reconfiguration is one of the central mechanisms in distributed systems.
  Due to failures and connectivity disruptions, the very set of
  service replicas (or \emph{servers}) and their roles in the
  computation may have to be reconfigured over time. To provide the desired level of
  consistency and availability to applications running on top of these
  servers, the \emph{clients} of the service should be able to reach some form of
  agreement on the system configuration.
  We observe that this agreement is naturally captured via a \emph{lattice} partial order on the system
  states.  
  We propose an asynchronous implementation of \emph{reconfigurable}
  lattice agreement that implies elegant reconfigurable versions of a
  large class of \emph{lattice} abstract data types, such as
  max-registers and conflict detectors,  as well as
  popular distributed programming abstractions, such as atomic 
  snapshot and commit-adopt. 

\vspace{1em}  
  
  \textbf{\underline{keywords:}} Reconfigurable services, lattice agreement.
\end{abstract}

\section{Introduction}
\label{sec:intro}

A decentralized service~\cite{Sch90,ABD95,Lam98,GGKV09} runs on a set of fault-prone
\emph{servers} that store replicas of the system state and run
a synchronization protocol to ensure consistency of concurrent data accesses.
In the context of a storage system exporting read and write operations, 
several proposals~\cite{rambo,dynastore,parsimonious,smartmerge,freestore,SKM17-reconf}
came out with a reconfiguration interface that allows the servers to join and leave
while ensuring consistency of the stored data.
Early proposals of reconfigurable storage systems~\cite{rambo} were based on using \emph{consensus}~\cite{FLP85,Her91} 
to ensure that replicas \emph{agree} on the evolution of the system membership.
Consensus, however, is expensive and difficult to implement, and recent 
solutions~\cite{dynastore,parsimonious,smartmerge,freestore,SKM17-reconf} replace consensus with
weaker abstractions capturing the minimal coordination required to
safely modify the system configuration.
These solutions, however,  lack a uniform way of deriving reconfigurable versions of static objects.   

\myparagraph{Lattice objects.}
In this paper, we propose a universal reconfigurable construction for a 
large class of objects.
Unlike a consensus-based reconfiguration proposed earlier for
generic state-machine replication~\cite{paxos-reconfigure}, our
construction is purely asynchronous, at the expense of assuming a restricted
object behavior.
More precisely, we assume that  the set $\Lat$ of the object's states
can be represented as a (join semi-) \emph{lattice}  $(\Lat, \sqsubseteq)$, where $\Lat$ 
is partially ordered by the binary relation $\sqsubseteq$ such that for
all elements of $x,y\in \Lat$, there exists the \emph{least upper bound} in~$\Lat$,
denoted $x\sqcup y$, where $\sqcup$, called the \emph{join} operator, is an associative, commutative, and idempotent binary
operator on $\Lat$. 
Many important data types, such as sets and counters,
as well as useful concurrent abstractions, such as conflict detector~\cite{AE14}, can be  expressed this way.  
Intuitively,~$x\sqcup y$ can be seen as a \emph{merge} of two
alternatively proposed updated states~$x$ and~$y$.
Thus, an implementation ensuring that all
``observable''  states are ordered by
$\sqsubseteq$ cannot be distinguished from an atomic object.   

Consider, for example, the \emph{max-register}~\cite{max-register}
data type with two operations: \textit{writeMax} writes a
value and \textit{readMax} returns the largest value written so far.
Its state space can be represented as a lattice $(\sqsubseteq,\sqcup)$
of its values, where $\sqsubseteq=\leq$ and $x \sqcup y = \max(x,y)$.
Intuitively, a linearizable implementation of max-register must ensure
that every read value is a join of previously proposed values, and all
read values are totally ordered (with respect to $\leq$).     
\ignore{
\footnote{Here we assume
that updates on a given position in the atomic-snapshot object are
totally ordered, which is the case when each position has a distinct
dedicated writer. Our approach can also be extended 
to the multi-writer case.}
Concurrent updates on distinct positions in an atomic can then be merged in a state in
which both updates took place using the simple set-union operation.
Therefore, if we ensure that all observable states of the object are related by
containment, the object is indistinguishable from atomic snapshot.     
}

\myparagraph{Reconfigurable lattice agreement.}
%
In this paper, we introduce the \emph{reconfigurable lattice agreement}~\cite{lattice-hagit,gla}. 
It is natural to treat the \emph{system configuration}, i.e., the set of servers available
for data replication, as an element in a lattice.   
A lattice-defined join of configurations, possibly concurrently proposed by different
clients, results in a new configuration.
The lattice-agreement protocol ensures that configurations evaluated
by concurrent processes are \emph{ordered}.
Despite processes possibly disagreeing about the precise configuration
they belong to, they can use the configurations relative ordering to 
maintain the system data consistency.

A configuration is defined by a set of servers, a quorum system~\cite{quorums}, 
i.e., a set system ensuring the intersection
property\footnote{The most commonly used quorum system is 
  majority-based: quorums are all majorities of servers. 
  We can, however, use any other quorum system, as suggested in~\cite{rambo, smartmerge}.}
and, possibly, other parameters.
For example,  elements of a reconfiguration lattice can be defined as sets of
\textit{configuration updates}: each such update either adds a server to the
configuration or removes a server from it. 
The \emph{members} of such a configuration are the set of all
servers that were added but not yet removed. 
A join of two configurations defined this way is simply a union of their
updates  (this approach is implicitly used in earlier asynchronous
reconfigurable constructions~\cite{dynastore,parsimonious,SKM17-reconf}).

\myparagraph{Reconfigurable L-ADTs and applications.}
We show that our reconfigurable lattice agreement, defined on a
product of a \emph{configuration lattice} and an \emph{object
  lattice},   immediately implies reconfigurable versions of 
many sequential types, such as \emph{max-register} and
\emph{conflict detector}.
More generally, any \emph{state-based commutative} abstract data
(called \emph{L-ADT}, for \emph{lattice abstract data type}, in this paper)  has a reconfigurable
\emph{interval-linearizable}~\cite{interval-linearizability}
implementation. Intuitively, interval-linearizability~\cite{interval-linearizability},
a generalization of  the classical linearizability~\cite{HW90},
allows specifying the behavior of an object when
multiple concurrent operations ``influence'' each other.
Their effects are then merged using a join operator, which turns out to be natural in the
context of reconfigurable objects.

Our transformations are straightforward.
To get an (interval-linearizable) reconfigurable implementation of an
L-ADT, we simply use its state lattice, as a parameter, in our reconfigurable lattice agreement.
%
%
%
%
The resulting implementations are naturally composable: we get a reconfigurable
composition of two L-ADTs by using a \emph{product} of their lattices.  
If operations on the object can be partitioned
into \emph{updates} (modifying the object state without providing informative
responses) and \emph{queries} (not modifying the object state), as in
the case of max-registers, the
reconfigurable implementation becomes linearizable\footnote{Such 
   ``update-query'' L-ADTs are known as state-based convergent
  replicated data types (CvRDT)~\cite{crdt}. These include
  \textit{max-register}, \textit{set} and \textit{abort flag} (a new
  type introduced in this paper).}.

We then use our reconfigurable implementations of \textit{max-register}, \emph{conflict
detector}, \emph{set} and \emph{abort-flag} to devise  
reconfigurable versions of \emph{atomic snapshot}~\cite{AADGMS93},
\emph{commit-adopt}~\cite{Gaf98} and \emph{safe agreement}~\cite{BG93b}. 
Figure~\ref{fig:summary} shows how are constructions are related.

\begin{figure}[tbph]
  \centering
 \includegraphics[scale=0.85]{model.0}
  \caption{Our reconfigurable implementations: reconfigurable lattice
    agreement (RLA) is used to construct
    linearizable implementations of a set, a max-register, an abort
    flag, and an interval-linearizable implementation of a conflict detector. On
  top of max-registers we construct an atomic snapshot; on top of
  a max-register, an abort-flag, and a conflict detector, we construct a commit-adopt
  abstraction; and, on top of sets and a max register, we implement a safe agreement abstraction.}
  \label{fig:summary}
\end{figure}

\myparagraph{Summary.}
Our reconfigurable construction is the first to be, at the same time:
\begin{itemize}
\item Asynchronous, unlike consensus-based
solutions~\cite{paxos-reconfigure,CL02,rambo}, and not assuming an
external lattice agreement service~\cite{smartmerge};
        
\item Uniformly applicable to a large class of objects, unlike existing
reconfigurable systems that either focus on read-write
storage~\cite{rambo,dynastore,parsimonious,smartmerge} or require data type-specific implementations of exported reconfiguration interfaces~\cite{SKM17-reconf};

\item Allowing for a straightforward composition of reconfigurable
  objects; 

\item Maintaining configurations with
abstract \emph{quorum systems}~\cite{quorums}, not restricted to
\emph{majority-based} quorums~\cite{dynastore,parsimonious};

\item Exhibiting optimal time complexity and 
message complexity comparable with the best known
implementations~\cite{dynastore,smartmerge,SKM17-reconf};

\item Logically separating \emph{clients} (external entities that use the
implemented service) from \emph{servers} (entities that maintain the
service and can be reconfigured).
\end{itemize}

We also believe our reconfigurable construction to be the simplest on
the market, using only twenty one lines of
pseudocode and provided with a concise proof.   

\myparagraph{Roadmap.}
The rest of the paper is organized as follows.
We give basic model definitions in Section~\ref{sec:model}. 
In Section~\ref{sec:ladtDef}, we define our type of 
reconfigurable objects, followed by the related notion of
reconfigurable lattice agreement in Section~\ref{sec:definition}.
In Section~\ref{sec:implementation}, we describe our implementation
of reconfigurable lattice agreement, and, in Section~\ref{sec:ladt}, 
we show how to use it to implement a reconfigurable L-ADT object.   
In Section~\ref{sec:applications} we describe some possible applications.
We conclude with, in Section~\ref{sec:related}, an overview of the related 
work, and, in Section~\ref{sec:conc}, a discussion on algorithms complexity 
and possible trade-offs.

\section{Definitions}
\label{sec:model}

\myparagraph{Replicas and clients.}
Let $\Pi$ be a (possibly infinite) set of potentially participating processes.
A subset of the processes, called \emph{replicas}, are used to maintain a replicated object.
%
%
A process can also act as a \emph{client}, invoking operations on the
object and proposing system reconfigurations. 
Both replicas and clients are subject to crash failures: a process
\emph{fails} when it prematurely stops taking steps of its algorithm.
A \emph{failure model} stipulates when and where failures might occur.
We present our failure model in Section~\ref{sec:definition}, where we formally
define reconfigurable lattice agreement.

%


\myparagraph{Abstract data types.}
An abstract data type ($\textit{ADT}$) is a tuple
$T=(A,B,Z,z_0,\tau,\delta)$.
Here~$A$ and $B$ are countable sets called the \emph{inputs} and
\emph{outputs}.  $Z$ is a countable set of abstract object 
\emph{states}, $z_0\in Z$ being the initial state of the object. 
The map~$\tau: Z\times A \to Z$ is the \emph{transition function}, specifying the effect of an input on the object state 
and the map~$\delta: Z\times A \to B$ is the \emph{output function}, specifying the output returned for a given input and 
object local state. The input represents an operation with its parameters, where (i) the operation can have a side-effect that changes the abstract state according to transition function $\tau$  and (ii) the operation can return values taken in the output $B$, which depends on the state in which it is called and the output function $\delta$ (for simplicity, we only consider deterministic types here, check, e.g.,~\cite{perrinBook}, for more details.)

\myparagraph{Interval linearizability.}
We now briefly recall the notion of
\emph{interval-linearizability}~\cite{interval-linearizability}, a
recent generalization of linearizability~\cite{HW90}.    

Let us consider an abstract data type $T=(A,B,Z,z_0,\tau,\delta)$.
A \emph{history} of $T$ is a sequence of inputs (elements of $A$) and outputs (elements
of $B$), each labeled with a process identifier and an
operation identifier.
An \emph{interval-sequential history} 
is a sequence:
\[
z_0,I_1,R_1,z_1,I_2,R_2,z_2\ldots,I_m,R_m,z_m,
\]  
where each $z_i\in Z$ is a state, $I_i\subseteq A$ is a set of inputs,
and $R_i\subseteq B$ is a set of outputs.
An \emph{interval-sequential specification}  is a set of
interval-sequential histories.

We only consider \emph{well-formed} histories. Informally, in a
well-formed history, a process only invokes an operation once its previous operation has
returned and every response $r$ is preceded by a ``matching'' operation $i$.   

\ignore{
A projection of a history or an interval sequential history $H$ to
inputs and outputs labelled with $p$ is denoted by $H|p$.
We only consider \emph{well-formed} histories, i.e., informally, a
process only invokes an operation once

An input is \emph{complete} in $H$ if it is followed with a \emph{matching
response}, i.e.,      
}

A history $H$ is \emph{interval-linearizable} respectively to an
interval-sequential specification~$\cS$ if it can be \emph{completed} (by
adding matching responses to incomplete operations) so that the
resulting history $\bar H$ can be associated with an
interval-sequential history $S$ such that: (1)~$\bar H$ and $S$ are
\emph{equivalent}, i.e., $\forall p\in\Pi$, $\bar H|p=S|p$, (2)~$S\in
\cS$, and (3)~$\rightarrow_{H}\subseteq \rightarrow_{S}$, i.e., $S$
preserves the real-time precedence relation of $H$. 
(Check~\cite{interval-linearizability} for more details on the definition.)

\myparagraph{Lattice agreement.}
An abstract (join semi-)\emph{lattice} is a tuple  $(\Lat, \sqsubseteq)$, where $\Lat$ is a set
partially ordered by the binary relation $\sqsubseteq$ such that for
all elements of $x,y\in \Lat$, there exists the least upper bound for
the set $\{x,y\}$.
The least upper bound is an associative, commutative, and idempotent binary
operation on $\Lat$, denoted by $\sqcup$ and called the \emph{join
  operator} on $\Lat$.
We write $x\sqsubset y$ whenever $x\sqsubseteq  y$ and $x\neq y$.
With a slight abuse of notation, for a set $L\subseteq \Lat$, we also write $\bigsqcup L$ for $\bigsqcup_{x\in
L} x$, i.e., $\bigsqcup L$ is the join of the elements of $L$. 

Notice that two lattices $(\Lat_1, \sqsubseteq_1)$ and $(\Lat_2,
\sqsubseteq_2)$ naturally imply a \emph{product} lattice
$(\Lat_1\times\Lat_2, \sqsubseteq_1\times\sqsubseteq_2)$ with a product
join operator $\sqcup=\sqcup_1\times\sqcup_2$.  
Here for all $(x_1,x_2),(y_1,y_2)\in\Lat_1\times\Lat_2$,
$(x_1,x_2)(\sqsubseteq_1\times\sqsubseteq_2)(y_1,y_2)$ if and only if
$x_1\sqsubseteq_1y_1$ and $x_2\sqsubseteq_2y_2$.

The (generalized) \emph{lattice agreement} concurrent abstraction, defined on a lattice
$(\Lat, \sqsubseteq)$, exports a single operation \textit{propose}
that takes an element of $\Lat$ as an argument and returns an element
of $\Lat$ as a response.
When the operation $\textit{propose}(x)$ is invoked by process $p$ we
say that $p$ \emph{proposes} $v$, and when the operation returns $v'$
we say that $p$ \emph{learns} $v'$.
Assuming that no process invokes a new operation before its previous
operation returns, the abstraction satisfies the following properties:
\begin{itemize}
\item {\bf Validity.} If a $\textit{propose}(v)$ operation returns a value $v'$ then 
  $v'$ is a join of some proposed values including $v$ and all values
  learnt before the invocation of the operation.
  

\item {\bf Consistency.} The learnt values are totally ordered by
  $\sqsubseteq$.

\item {\bf Liveness.} If a process invokes a \textit{propose}
  operation and does not fail then the operation eventually returns.
  
\end {itemize}  
\textbf{A historical remark.} The  original definition of long-lived lattice agreement~\cite{gla}
    separates ``receive'' events and ``learn''
    events.
Here we suggest a simpler definition that represents the two events as the invocation and 
  the response of a \textit{propose} operation.
This also allows us to slightly strengthen the validity condition so
that it accounts for the \emph{precedence} relation between
\textit{propose} operations.   
As a result, we can directly relate lattice agreement to linearizable~\cite{HW90} and
interval-linearizable~\cite{interval-linearizability} implementations,
without introducing artificial ``nop'' operations~\cite{gla}.

\section{Lattice Abstract Data Type}
\label{sec:ladtDef}

%
In this section, we introduce a class of types that we call 
\emph{lattice abstract data types} or $\textit{L-ADT}$.
In an $\textit{L-ADT}$, the set of states forms a join semi-lattice 
with a partial order $\sqsubseteq^Z$. 
A lattice object is therefore defined as a tuple 
$L=(A,B,(Z,\sqsubseteq^Z,\sqcup^Z),z_0,\tau,\delta)$.\footnote{For
convenience, we explicitly specify the join operator $\sqcup^Z$ here, i.e., the least upper
bound of $\sqsubseteq^Z$.}  
Moreover, the transition function $\delta$ must comply 
with the partial order $\sqsubseteq^Z$, that is $\forall z,a\in Z\times A: z\sqsubseteq^Z \tau(z,a)$, and 
the composition of transitions must comply with the join operator, that is 
$\forall z\in Z,\forall a,a'\in A: \tau(\tau(z,a),a') =\tau(z,a)\sqcup^Z\tau(z,a') =\tau(\tau(z,a'),a)$.
Hence, we can say that the transition function is ``commutative''. 

\myparagraph{Update-query L-ADT.}
We say an L-ADT $L=(A,B,(Z,\sqsubseteq^Z,\sqcup^Z),z_0,\tau,\delta)$
is \emph{update-query}  if $A$ can be partitioned in \emph{updates} $U$ and \emph{queries} $Q$
such that:
\begin{itemize}
\item there exists a special ``dummy'' response $\bot$ ($z_0$ may also be used) 
such that $\forall u\in U,\; z\in Z$, $\delta(u,z)=\bot$, i.e., updates do not
return informative responses;  
\item $\forall q\in Q,\; z\in Z$, $\tau(q,z)=z$, i.e., queries do
not modify the states.
\end{itemize}
This class is also known as a state-based convergent replicated
 data types (CvRDT)~\cite{crdt}.
Typical examples of update-query L-ADTs are
\textit{max-register}~\cite{max-register} (see
Section~\ref{sec:intro}) or \emph{sets}.
Note that any (L-)ADT can be transformed into an update-query (L-)ADT by 
``splitting its operations'' into an update and a query (see~\cite{perrinBook}).

\myparagraph{Composition of L-ADTs.}
The composition of two ADTs $T=(A,B,Z,z_0,\tau,\delta)$ and $T'=(A',B',Z',z_0',\tau',\delta')$ 
is denoted $T\times T'$ and is equal to $(A+A', B\cup B', Z\times Z',(z_0,z_0'),\tau'',\delta'')$;
where $A+A'$ denotes the disjoint union and where $\tau''$ and $\delta''$ apply, according to the domain~$A$ or $A'$ of the input, 
either $\tau$ and $\delta$ or $\tau'$ and $\delta'$ on their respecting half of the state (see~\cite{perrinBook}).

Since the cartesian product of two lattices remains a lattice, the composition of L-ADTs is naturally defined and 
produces an L-ADT. The composition is also closed to update-query~ADT, and thus to update-query L-ADT. 
Moreover, the composition is an associative and commutative operator, and hence, can easily be used to construct elaborate L-ADT.

\ignore{
Given two lattice, their product is also a lattice. Hence, as the composition of two objects provides an abstract 
state alphabet that is the product of each alphabet, then the composition of two $\textit{L-ADT}$ creates an 
$\textit{L-ADT}$.

Note tha we can also merge operations, that is to define an operation on the product as a product of operation on each 
objects. This can be more convenient. 

TO BE detailled with structure and using references from ADT and Lattice.
}

\myparagraph{Configurations as L-ADTs.}
%
Let us also use the formlism of L-ADT to define a \emph{configuration L-ADT} as a tuple $(A^{\cC},B^{\cC},(\cC,\sqsubseteq^{\cC},\sqcup^{\cC}),C_0,\tau^{\cC},\delta^{\cC})$ 
with $C_0\in \cC$ the \emph{initial configuration}.  
%
%
For each element  $C$ of the \emph{configuration lattice} $\cC$, the input set $A$ includes the query operations 
$\textit{members}()$, such that $\delta^{\cC}(C,\textit{members}()) \subseteq \Pi$, and $\textit{quorums}()$ 
where $\delta^{\cC}(C,\textit{quorums}())\subseteq 2^{\delta^{\cC}( C,\textit{members}())}$ is a \emph{quorum system}, 
that is, every two subsets in~$\delta^{\cC}(C,\textit{quorums}())$ have a non-empty intersection.  
With a slight abuse of notation, we will write these operations as $\textit{members}(C)$ and $\textit{quorums}(C)$.


For example,  $\cC$ can be the set of  tuples $(In,Out)$, where $In\subseteq \Pi$ is a set
of \emph{activated} processes,  and $Out\subseteq \Pi$ is a set of \emph{removed} processes. 
Then $\sqsubseteq^\cC$ can be defined as the piecewise set inclusion on $(In,Out)$. 
The set of members of $(In,Out)$ will simply be $In-Out$ and 
the set of quorums (pairwise-intersecting subsets of $In-Out$), e.g., all majorities of~$In-Out$. 
Operations in $A^{\cC}$ can be $\textit{add}(s)$, $s\in\Pi$, that adds $s$ to the
set of activated processes and
$\textit{remove}(s)$, $s\in\Pi$, that adds $s$ to the set of removed
processes of a configuration.    
One can easily see that updates ``commute'' and that the type is indeed
a configuration L-ADT. Let us note that L-ADTs allow for more expressive
reconfiguration operations than simple \textit{adds} and
\textit{removes}, e.g., maintaining a minimal number of members in a
configuration or adapting the quorum system dynamically, as 
studied in detail by Jehl et al. in~\cite{smartmerge}.

\myparagraph{Interval-sequential specifications of L-ADTs.} 
Let $L=(A,B,(Z,\sqsubseteq^Z,\sqcup^Z),z_0,\tau,\delta)$ be an L-ADT. 
As $\tau$ ``commutes'', the state reached after a sequence of 
transitions is order-independent.
Hence, we can define 
a natural, deterministic, interval-sequential specification of $L$, $\cS_L$, 
as the set of interval-sequential histories
$z_0,I_1,R_1,z_1,\ldots,I_m,R_m,z_{m}$
such that:
\begin{itemize}
\item  $\forall i=1,\ldots,m$, $z_i=\bigsqcup^Z_{a\in I_{i-1}}
\tau(a,z_{i-1})$, i.e., every state $z_i$ is a join of operations in
$I_{i-1}$ applied to $z_{i-1}$.  
\item $\forall i=1,\ldots,m$, $\forall r\in R_i$, $r=\delta(a,z_i)$,
where $a$ is the matching invocation operation for $r$, i.e., every response in $R_i$ is
the result of the associated operation applied to state $z_i$.
\end{itemize}

%

\section{Reconfigurable lattice agreement: definition}
\label{sec:definition}

We define a reconfigurable lattice $(\Lat,\sqsubseteq)$ as the product of
the state spaces of an \emph{object} L-ADT
$(A^{\cO},B^{\cO},(\cO,\sqsubseteq^{\cO},\sqcup^{\cO}),O_0,\tau^{\cO},\delta^{\cO})$
and a \emph{configuration} L-ADT
$(A^{\cC},B^{\cC},(\cC,\sqsubseteq^{\cC},\sqcup^{\cC}),C_0,\tau^{\cC},\delta^{\cC})$
(see Section~\ref{sec:ladtDef}). 
That is, $(\Lat,\sqsubseteq)=(\cO\times\cC,\sqsubseteq^{\cO}\times\sqsubseteq^{\cC})$ with
the product join operator $\sqcup=\sqcup^{\cO}\times\sqcup^{\cC}$.
Our main tool is  the reconfigurable lattice agreement, a generalization
of lattice agreement operating on $(\Lat,\sqsubseteq)$.
We say that $\Lat$ is the set of \emph{states}. 
For a state $u=(O,C)\in\Lat$, we use notations $u.O=O$ and $u.C=C$. 

\myparagraph{Failure model.}
%
%
When a client $p$ invokes $\textit{propose}((O,C))$, we say that $p$
\emph{proposes} object state~$O$ and configuration state $C$. 
We say that $p$ \emph{learns} an object state $O'$ and a configuration~$C'$ 
if its \textit{propose} invocation returns $(O',C')$.

%
%
We say that a configuration $C$ is \emph{potential}  
if there is a set $\{C_1,\ldots, C_k\}$ of proposed configurations such 
that~$C=C_0\sqcup^\cC(\bigsqcup^{\cC}_{i=1,\ldots,k}C_i)$ (with~$C_0$ the initial configuration).
A configuration $C$ is said to be \emph{superseded} as soon as a process learns 
a state $(*,C')$ with~$C\sqsubseteq^{\cC} C'$ and $C\neq C'$.
At any moment of time, a configuration is \emph{active} if it is a potential but not 
yet superseded configuration. Intuitively, some quorum of a configuration should
remain ``reachable'' as long as the configuration is active.
We say that a replica $r$ is \emph{active} when it is a member of 
an active configuration $C$, i.e., $r\in \textit{member}(C)$. 
A replica is \emph{correct} if, from some point on, it is forever
active and not failed.
A \emph{client} is correct if it does not fail while executing a
\textit{propose} operation.

A configuration $C$ is \emph{available} if some set of replicas in $\textit{quorums}(C)$
contains only correct processes.
In arguing liveness in this paper, we assume the following:
\begin{itemize}
\item {\bf Configuration availability.} Any potential configuration 
that is never superseded must be available.
\end{itemize}

Therefore, if a configuration is superseded by a strictly larger
(w.r.t. $\sqsubseteq^{\cC}$) one, then it does not have to be
available, i.e., we can safely remove some replicas from it for maintenance. 

\myparagraph{Liveness properties.}
In a constantly reconfigured system, we may not be able to ensure
liveness to all operations.
A slow client can be always behind the active configurations:
its set of estimated potential configurations can
always be found to constitute a superseded configuration. 
Therefore, for liveness, we assume that only finitely many
reconfigurations occur. Otherwise, only lock-freedom may be provided.

Therefore, to get a reconfigurable object, we replace the liveness
property of lattice agreement with the following one:

\begin{itemize}

\item {\bf Reconfigurable Liveness.} In executions with finitely many
  distinct proposed configurations, every \textit{propose} operation invoked by a correct
  client eventually returns.
\end {itemize}  

Thus, the desired liveness guarantees are ensured as long as only
finitely many distinct configurations are proposed.
However, the clients are free to perform infinitely many
\emph{object} updates without making any correct client starve.   

Formally, \emph{reconfigurable lattice agreement} defined on  
$(\Lat,\sqsubseteq)=(\cO\times\cC,\sqsubseteq^{\cO}\times\sqsubseteq^{\cC})$
satisfies the Validity and Consistency properties of
lattice agreement (see Section~\ref{sec:model}) and the Reconfigurable
Liveness property above.  

%
Furthermore, we  can only guarantee liveness to  clients assuming that, eventually, 
every correct system participant (client or replica) is informed of the 
currently active configuration. It boils down to ensuring that an eventually 
consistent reconfigurable memory is available to store the greatest learnt 
configuration.

For simplicity, we assume that a reliable broadcast primitive~\cite{rsdp-book2011} is
available, ensuring that (i) every broadcast message was previously broadcast, (ii)~if a correct
process broadcasts a message $m$, then it eventually delivers $m$, and
(iii)~every message delivered by a correct process is eventually
delivered by every correct process.     
Note that \emph{Configuration availability} implies that an active configuration 
is either available or sufficiently responsive to be superseded.   

\section{Reconfigurable lattice agreement: implementation}

\label{sec:implementation}

We now present our main technical result, a reconfigurable implementation of generalized lattice agreement. 
This algorithm will then be used to implement reconfigurable objects.

\myparagraph{Overview.}
The algorithm is specified by the pseudocode of Figure~\ref{fig:rgla2}. Note that we assume that all procedures
(including sub-calls to the \emph{updateState} procedure) are executed \emph{sequentially} until they terminate 
or get interrupted by the wait condition in line~\ref{line:rgla2:wait}.

In the algorithm, every process (client or server)~$p$ maintains a \emph{state} variable $v_p\in\Lat$ 
storing its local estimate of the greatest committed object ($v_p.O$) and configuration ($v_p.C$) states, 
initialized to the initial element of the lattice $(O_0,C_0)$. 
We say that a state is committed if a process broadcasted it in line~\ref{line:rgla2:bcast}. 
Note that all learnt states are committed (possibly indirectly by another process), 
but a process may fail before learning its committed state. 
Every process $p$ also maintains $T_p$, the set of \emph{active} \emph{input} configuration states, i.e., 
input configuration states that are not superseded by the committed state estimate~$v_p$.  
For the object lattice, processes stores in $\mathit{obj}_p$ the join of all known proposed objects states. 

To propose $\mathit{prop}$, client $p$ updates its local variables through the \emph{updateState} procedure 
using its input object and configuration states, $\mathit{prop}.O$ and $\mathit{prop}.C$ (line~\ref{line:rgla2:initState}).
Clients then enter a while loop where they send \emph{requests} associated with their current sequence number 
$\textit{seq}_p$ and containing the triplet $(v_p,\mathit{obj}_p,T_p)$, to all replicas from \emph{every possible join} 
of active base configurations and wait until either (1)~they get interrupted by discovering a greater committed configuration 
through the underlying reliable broadcast, or (2)~for each possible join of input configurations with the commit estimate configuration, 
a quorum of its replicas responded with messages of the type $\langle (\textit{resp}, \textit{seq}_p), (v,s_O, S_C) \rangle$, 
where~$(v,s_O, S_C)$ corresponds to the replica updated values of its triple $(v_p,\mathit{obj}_p,T_p)$ (lines~\ref{line:rgla2:send}--\ref{line:rgla2:wait}).

Whenever a process (client or replica) $p$ receives a new request, response or broadcast of the type 
$\langle \textit{msgType},(v,s_O,S_C)\rangle$, it updates its commit estimate and object candidate 
by joining its current values with the one received in the message. It also 
merges its set of input configurations $T_p$ with the received input configurations, 
but the values superseded by the updated commit estimate are trimmed off $T_p$ 
(lines~\ref{line:rgla2:newCommit}--\ref{line:rgla2:newConfs}).
For replicas, they also send a response containing the updated triplet $(v_p,\mathit{obj}_p,T_p)$
to the sender of the request (line~\ref{line:rgla2:sendback}). 

If responses from quorums of all queried configurations are received and no
response contained a \emph{new}, not yet known, input configuration or a greater object state, 
then the couple formed by $\mathit{obj}_p$ and the join of the commit estimate configuration with all input configurations 
$\bigsqcup^\cC (\{v.C\}\cup T_p)$ 
is broacasted and returned as the new learnt state (lines~\ref{line:rgla2:acceptTest}-\ref{line:rgla2:return}). 
Otherwise, clients proceed to a new round. 

To ensure wait-freedom, we integrate a helping mechanism simply consisting in  having clients adopt 
their committed state estimate (line~\ref{line:rgla2:adoptCommit}). But, to know when a committed state 
is great enough to be returned, clients must first complete a communication round without interference 
from reconfigurations (line~\ref{line:rgla2:setAdoptLB}). After such a round, the join of all known states, 
stored in $\mathit{learnLB}$, can safely be used as lower bound to return a committed state. 
We say that such configuration is pre-committed, hence all committed states have a pre-committed configuration.

\begin{figure}[!]
\hrule \vspace{1mm}
 {\small
\setcounter{linenumber}{0}
\begin{tabbing}
 bbbb\=bbbb\=bbbb\=bbbb\=bbbb\=bbbb\=bbbb\=bbbb \=  \kill
\textbf{Local variables:} \\
$\textit{seq}_p$, initially $0$  \commentline{The number of issued requests}\\
$v_p$, initially $(O_0,C_0)$  \commentline{The last learnt state}\\
$T_p$, initially $\emptyset$  \commentline{The set of proposed configuration states}\\
$\mathit{obj}_p$, initially $O_0$  \commentline{The candidate object state}\\
\\
\textbf{operation} \textit{propose}$(\mathit{prop})$
\commentline{Propose a new state $\mathit{prop}$}\\
\nnll\label{line:rgla2:initState}\> $\textit{updateState}(v_p,\mathit{prop}.O,\{\mathit{prop}.C\})$\\
\nnll\label{line:rglpassinga2:initAdoptLB}\> $\mathit{learnLB} := \bot$\\
\nnll\label{line:rgla2:while}\> \textbf{while} \textit{true} \textbf{do}\\
\nnll\label{line:rgla2:seqNum}\>\> $\textit{seq}_p := \textit{seq}_p+1$\\
\nnll\label{line:rgla2:refCommit}\>\> $\mathit{oldCommit} :=  v_p$ \commentline{Archive commit estimate}\\
\nnll\label{line:rgla2:refCandidates}\>\> $\mathit{oldCandidates} :=  (\mathit{obj}_p,T_p)$ \commentline{Archive candidate states}\\
\nnll\label{line:rgla2:queriedConfs}\>\> $V :=  \{\bigsqcup^\cC (\{v_p.C\}\cup S)~|~S\subseteq T_p\}$\commentline{Queried configurations}\\

\nnll\label{line:rgla2:send}\>\> send $\langle (\textit{REQ},
\textit{seq}_p),(v_p,\mathit{obj}_p,T_p)\rangle$ to $\bigcup_{u \in V} \textit{members}(u)$ \\ 
\nnll\label{line:rgla2:wait}\>\> \textbf{wait until} $\mathit{oldCommit}.C \neq v_p.C$ or 
$\forall u \in V$, received responses of the type  \\
\>\>\>\>\>  $\langle (\textit{RESP},\textit{seq}_p),\_ \rangle$ from some $Q\in\textit{quorums}(u)$\\
\nnll\label{line:rgla2:noInterrupt}\>\> \textbf{if}  $\mathit{oldCommit}.C=v_p.C\wedge \mathit{oldCandidates} = (\_,T_p)$ \textbf{then}  \commentline{Stable configurations}\\
\nnll\label{line:rgla2:setAdoptLB}\>\>\>  \textbf{if} $\mathit{learnLB}=\bot$ \textbf{then} $\mathit{learnLB}=(\mathit{obj}_p,\bigsqcup^\cC (\{v_p.C\}\cup T_p)$\\
\nnll\label{line:rgla2:acceptTest}\>\>\>  \textbf{if} $\mathit{oldCandidates} = (\mathit{obj}_p,\_)$ \textbf{then}  \commentline{No greater object received } \\
\nnll\label{line:rgla2:bcast}\>\>\>\> \textit{broadcast} $\langle \textit{COMMIT},((\mathit{obj}_p,\bigsqcup^\cC (\{v_p.C\}\cup T_p),\mathit{obj}_p,\emptyset) \rangle$\\
\nnll\label{line:rgla2:return}\>\>\>\> \textbf{return} $(\mathit{obj}_p,\bigsqcup^\cC (\{v_p.C\}\cup T_p))$\\ 
\nnll\label{line:rgla2:adoptCommit}\>\>  \textbf{if} $\mathit{learnLB}\neq\bot\wedge\mathit{learnLB}\sqsubseteq v_p$ \textbf{then return} $v_p$ \commentline{Adopt learnt state}\\
\\
\textbf{upon} \textit{receive} $\langle \textit{msgType},\textit{msgContent} \rangle$ from process $q$ \\
\nnll\label{line:rgla2:mergeReceived}\> $updateState(\textit{msgContent})$ \commentline{Update tracked states} \\
\nnll\label{line:rgla2:sendback}\> \textbf{if} $\textit{msgType}=(\textit{REQ},\textit{seq}) $ \textbf{then} send $\langle (\textit{RESP},\textit{seq}),(v_p,\mathit{obj}_p,T_p)\rangle$ to $q$\\ 
\\
\textbf{procedure} \textit{updateState}$(v,s_O,S_C)$ \commentline{Merge tracked states}\\
\nnll\label{line:rgla2:newCommit}\> $v_p := v_p\sqcup v$ \commentline{Update the commit estimate}\\
\nnll\label{line:rgla2:newObject}\> $\mathit{obj}_p := \mathit{obj}_p\sqcup^\cO s_O$ \commentline{Update the object candidate}\\
\nnll\label{line:rgla2:newConfs}\> $T_p := \{u\in (T_p\cup S_C)~|~u\not\sqsubseteq^\cC v_p.C \}$ \commentline{Update and trim input candidates}\\
\end{tabbing}
}
\vspace{-1.5mm}
 \hrule
\caption{Reconfigurable universal construction: code for process $p$.}

\label{fig:rgla2}
\end{figure}

\myparagraph{Correctness preliminaries.}
Let us first show that elements of the type $(v,s_O,S_C)\in \mathcal{L}\times\cO\times 2^\cC$, in which we 
have that~$\forall u\in S_C, u\not\sqsubseteq^\cC v.C$, admit a partial order $\sqsubseteq^*$ defined as:
\[
(v,s_O,S_C)\sqsubseteq^* (v',s_O',S_C') \Leftrightarrow v\sqsubseteq v' \wedge s_O\sqsubseteq^\cO s_O' \wedge 
\{u\in S_C| u\not\sqsubseteq^\cC v'.C\} \subseteq S_C'{}.
\]

\begin{property}
The relation $\sqsubseteq^*$ is a partial order on elements of the type $(v,s_O,S_C)\in \mathcal{L}\times\cO\times 2^\cC$, in which we 
have that~$\forall u\in S_C, u\not\sqsubseteq^\cC v.C$.
\end{property}

\begin{proof}
Note that, since $\sqsubseteq$ and $\sqsubseteq^\cO$ are partial orders, the reflexivity, symmetry and transitivity properties 
are verified if they are verified by the relation $\{u\in S_C| u\not\sqsubseteq^\cC v'.C\} \subseteq S_C'$.
The symmetry property is trivially verified as for any property $\mathcal{P}$, we have $\{u\in S_C| \mathcal{P}(u)\}\subseteq S_C$.
For transitivity, $(v,s_O,S_C)\sqsubseteq^* (v',s_O',S_C')$ and $(v',s_O',S_C')\sqsubseteq^* (v'',s_O'',S_C'')$ implies that: 
\[
\{u\in S_C| u\not\sqsubseteq^\cC v''.C\}\subseteq \{u\in \{w\in S_C|w\not\sqsubseteq^\cC v'.C\}|u\not\sqsubseteq^\cC v''.C\}\subseteq\{u\in S_C'|u\not\sqsubseteq^\cC v''.C\}\subseteq S_C''{}.
\]
Hence, that $(v,s_O,S_C)\sqsubseteq^*(v'',s_O'',S_C'')$. 
For antisymmetry, given $(v,s_O,S_C)\sqsubseteq^* (v',s_O',S_C')$ and $(v',s_O',S_C')\sqsubseteq^* (v,s_O,S_C)$, 
the relations $\sqsubseteq$ and $\sqsubseteq^\cC$ imply that $v=v'$ and $s_O=s_O'$. 
But as by assumption $\forall u\in S_C, u\not\sqsubseteq^\cC v.C$, we have $S_C= \{u\in S_C| u\not\sqsubseteq^\cC v.C\}$. 
Moreover, since~$v=v'$ we have $S_C= \{u\in S_C| u\not\sqsubseteq^\cC v'.C\}$ and we obtain that $S_C\subseteq S_C'$. 
Likewise, we have $S_C'\subseteq S_C$, and thus, we obtain that $S_C=S_C'$.
\end{proof}

Intuitively, the set of elements $(v,s_O,S_C)\in \mathcal{L}\times\cO\times 2^\cC$, in which we 
have that~$\forall u\in S_C, u\not\sqsubseteq^\cC v.C$, equipped with the partial order $\sqsubseteq^*$ is 
a join semi-lattice in which the procedure \emph{updateState} replaces the triple $(v_p,\mathit{obj}_p,T_p)$ 
with a join of itself and the procedure argument. But, we will only prove that the procedure \emph{updateState} 
replace $(v_p,\mathit{obj}_p,T_p)$ with an upper bound of itself and the procedure argument $(v,s_O,S_C)$:
 
\begin{lemma}
\label{lem:growingState}
Let $(v_p^\mathit{old},\mathit{obj}_p^\mathit{old},T_p^\mathit{old})$ and $(v_p^\mathit{new},\mathit{obj}_p^\mathit{new},T_p^\mathit{new})$
be the value of $(v_p,\mathit{obj}_p,T_p)$ respectively before and after an execution of the \emph{updateState}
procedure with argument $(v,s_O,S_C)$, then, we have:
\[
(v_p^\mathit{old},\mathit{obj}_p^\mathit{old},T_p^\mathit{old})\sqsubseteq^*(v_p^\mathit{new},\mathit{obj}_p^\mathit{new},T_p^\mathit{new})
\wedge (v,s_O,S_C) \sqsubseteq^*(v_p^\mathit{new},\mathit{obj}_p^\mathit{new},T_p^\mathit{new}){}.
\]
\end{lemma}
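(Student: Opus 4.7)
The plan is to unfold the definition of $\sqsubseteq^*$ and verify each of its three clauses for both required comparisons, using the explicit form of the assignments performed by \textit{updateState}. Writing out the post-execution values,
\[
v_p^{\mathit{new}} = v_p^{\mathit{old}} \sqcup v,\quad
\mathit{obj}_p^{\mathit{new}} = \mathit{obj}_p^{\mathit{old}} \sqcup^{\cO} s_O,\quad
T_p^{\mathit{new}} = \{u\in T_p^{\mathit{old}}\cup S_C \mid u\not\sqsubseteq^{\cC} v_p^{\mathit{new}}.C\},
\]
the first two clauses of $\sqsubseteq^*$ for both inequalities are immediate from the definition of $\sqcup$ and $\sqcup^{\cO}$ as joins: each operand of a join is below the join with respect to the corresponding partial order.

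For the set-inclusion clause, I would handle the two inequalities in parallel. To show $(v_p^{\mathit{old}},\mathit{obj}_p^{\mathit{old}},T_p^{\mathit{old}}) \sqsubseteq^* (v_p^{\mathit{new}},\mathit{obj}_p^{\mathit{new}},T_p^{\mathit{new}})$, I pick $u\in T_p^{\mathit{old}}$ with $u\not\sqsubseteq^{\cC} v_p^{\mathit{new}}.C$; since $u\in T_p^{\mathit{old}}\cup S_C$, $u$ satisfies the membership predicate defining $T_p^{\mathit{new}}$ and hence $u\in T_p^{\mathit{new}}$. The symmetric argument, starting instead from $u\in S_C$, establishes $(v,s_O,S_C) \sqsubseteq^* (v_p^{\mathit{new}},\mathit{obj}_p^{\mathit{new}},T_p^{\mathit{new}})$.

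I should also briefly check that the triples under consideration actually live in the domain on which $\sqsubseteq^*$ is defined, i.e., satisfy $\forall u\in S_C,\ u\not\sqsubseteq^{\cC} v.C$. This holds for $(v_p^{\mathit{new}},\mathit{obj}_p^{\mathit{new}},T_p^{\mathit{new}})$ directly by the filtering in the assignment to $T_p^{\mathit{new}}$, and it is assumed (invariantly maintained by the algorithm) for the inputs $(v_p^{\mathit{old}},\mathit{obj}_p^{\mathit{old}},T_p^{\mathit{old}})$ and $(v,s_O,S_C)$.

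There is no real obstacle here; the proof is essentially a line-by-line unpacking of the definition of $\sqsubseteq^*$ against the assignments in \emph{updateState}. The only mild point of care is that the set-inclusion clause uses the \emph{new} configuration $v_p^{\mathit{new}}.C$ to filter, which is exactly the value used by the algorithm when forming $T_p^{\mathit{new}}$, so the two filtrations match up and the inclusion is immediate rather than merely ``up to trimming''.
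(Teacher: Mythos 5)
Your proof is correct and follows essentially the same route as the paper's: unfold the three assignments of \emph{updateState}, observe the symmetry between the old triple and the argument, and check each clause of $\sqsubseteq^*$ directly, with the set-inclusion clause following because the filter used to build $T_p^{\mathit{new}}$ is exactly $v_p^{\mathit{new}}.C$. Your additional remark on the domain condition is handled in the paper as a separate observation immediately after the lemma, so nothing is missing.
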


\begin{proof}
Let us first note that we can rewrite the operation as follows: 
\begin{itemize}
\item Line~\ref{line:rgla2:newCommit}: $v_p^\mathit{new}= v_p^\mathit{old}\sqcup v$
\item Line~\ref{line:rgla2:newObject}: $\mathit{obj}_p^\mathit{new}= \mathit{obj}_p^\mathit{old}\sqcup^\cO s_O$
\item Line~\ref{line:rgla2:newConfs}: $T_p^\mathit{new}= \{u\in (T_p^\mathit{old}\cup S_C)| u\not\sqsubseteq^\cC (v_p^\mathit{old}\sqcup^\cC v).C\}$
\end{itemize}
Hence, the use of $(v_p^\mathit{old},\mathit{obj}_p^\mathit{old},T_p^\mathit{old})$ and $(v,s_O,S_C)$ are symmetrical.
Moreover, it is trivial to check that, w.l.o.g., $(v,s_O,S_C)\sqsubseteq^*(v_p^\mathit{new},\mathit{obj}_p^\mathit{new},T_p^\mathit{new})$.
Indeed, $v\sqsubseteq v_p^\mathit{old}\sqcup v$, $s_O\sqsubseteq^\cO \mathit{obj}_p^\mathit{old}\sqcup^\cO s_O$ and 
$\{u\in S_C|u\not\sqsubseteq^\cC v_p^\mathit{new}.C\} \subseteq \{u\in (T_p^\mathit{old}\cup S_C)| u\not\sqsubseteq^\cC (v_p^\mathit{old}\sqcup^\cC v).C\} = T_p^\mathit{new}$.
\end{proof}

Note that it is also trivial to check that initially we have $\forall u\in T_p, u\not\sqsubseteq^\cC v_p.C$ as $T_p=\emptyset$ 
and that it remains true after a complete execution of the \emph{updateState} procedure as $T_p$ is 
taken as the set of elements of $(T_p^\mathit{old}\cup S_C)$ satisfying this condition.

Let us now check that $\sqsubseteq^*$ is a refinement of the order $\sqsubseteq$ for the projection $\mathit{decide}()$ 
defined such that $\mathit{decide}(v,s_O,S_C)=(s_O,\bigsqcup^\cC(\{v.C\}\cup S_C))$. Formally:

\begin{lemma}
$(v,s_O,S_C)\sqsubseteq^*(v',s_O',S_C') \implies \mathit{decide}(v,s_O,S_C)\sqsubseteq\mathit{decide}(v',s_O',S_C')$.
\label{lem:orderProjection}
\end{lemma}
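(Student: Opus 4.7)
The plan is to unfold the definition of $\sqsubseteq$ on the product lattice and verify the two component inequalities separately. Recall $(o,c)\sqsubseteq(o',c')$ iff $o\sqsubseteq^\cO o'$ and $c\sqsubseteq^\cC c'$. So, assuming $(v,s_O,S_C)\sqsubseteq^*(v',s_O',S_C')$, I need to show
\[
s_O\sqsubseteq^\cO s_O' \qquad\text{and}\qquad \bigsqcup\nolimits^\cC(\{v.C\}\cup S_C)\sqsubseteq^\cC \bigsqcup\nolimits^\cC(\{v'.C\}\cup S_C').
\]
The first conjunct is immediate: the definition of $\sqsubseteq^*$ explicitly requires $s_O\sqsubseteq^\cO s_O'$.

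For the configuration component, I will argue elementwise: it suffices to show that every $u\in\{v.C\}\cup S_C$ is bounded by the right-hand join. For $u=v.C$, from $v\sqsubseteq v'$ (given by $\sqsubseteq^*$) I extract $v.C\sqsubseteq^\cC v'.C$, so $u\sqsubseteq^\cC v'.C\sqsubseteq^\cC \bigsqcup^\cC(\{v'.C\}\cup S_C')$. For $u\in S_C$, I split into two cases: either $u\sqsubseteq^\cC v'.C$, in which case the bound follows just as above; or $u\not\sqsubseteq^\cC v'.C$, in which case the definition of $\sqsubseteq^*$ guarantees $u\in\{w\in S_C\mid w\not\sqsubseteq^\cC v'.C\}\subseteq S_C'$, so $u\sqsubseteq^\cC\bigsqcup^\cC(\{v'.C\}\cup S_C')$. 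By the universal property of the least upper bound on $\cC$, the join on the left is then bounded by the join on the right.

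Combining the two component inequalities via the definition of the product order yields $\mathit{decide}(v,s_O,S_C)\sqsubseteq\mathit{decide}(v',s_O',S_C')$, as required.

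I do not expect any real obstacle here: the only subtle point is that the definition of $\sqsubseteq^*$ only tracks the elements of $S_C$ that are \emph{not} already dominated by $v'.C$, but this is exactly enough because the dominated ones are absorbed by the $v'.C$ term present on the right-hand side of the join. The rest is a direct unfolding of definitions.
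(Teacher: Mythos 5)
Your proof is correct and follows essentially the same route as the paper's: both reduce the claim to the component inequalities $s_O\sqsubseteq^\cO s_O'$ and the configuration-join comparison, and both hinge on the same observation that elements of $S_C$ dominated by $v'.C$ are absorbed by the $v'.C$ term while the remaining ones lie in $S_C'$ by definition of $\sqsubseteq^*$. The only difference is presentational: you argue elementwise with an explicit case split, whereas the paper chains two join inequalities (first enlarging $S_C$ to $S_C'$ with $v'.C$ fixed, then replacing $v.C$ by $v'.C$).
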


\begin{proof}
This result follows directly from the definition of $\sqsubseteq^*$. Indeed, as $(v,s_O,S_C)\sqsubseteq^*(v',s_O',S_C')$, 
we have $s_O\sqsubseteq^\cO s_O'$. Moreover, we have $\{u\in S_C| u\not\sqsubset^\cC v'.C\}\subseteq S_C'$. Hence 
we have $\bigsqcup^\cC(\{v'.C\}\cup S_C)\sqsubseteq^\cC \bigsqcup^\cC(\{v'.C\}\cup S_C')$. But, as moreover we have 
$v\sqsubseteq v'$, we obtain that $\bigsqcup^\cC(\{v.C\}\cup S_C)\sqsubseteq^\cC \bigsqcup^\cC(\{v'.C\}\cup S_C')$.
\end{proof}

\myparagraph{Consistency.}
Let us start with the consistency proof. For this, consider any run of the algorithm in Figure~\ref{fig:rgla2}.
Let $s$ be any state committed in the considered run.
Let $p(s)$ denote \emph{the first} client that committed $s$ in line~\ref{line:rgla2:bcast}. 
Let $V(s)$, $v(s)$, $\mathit{obj}(s)$ and $T(s)$ denote the value of respectively the variables $V$, $v_{p(s)}$, 
$\mathit{obj}_{p(s)}$ and $T_{p(s)}$ at the moment when $p(s)$ commited $s$ in line~\ref{line:rgla2:bcast}.
Note that, as~$p(s)$ passed the tests in lines~\ref{line:rgla2:noInterrupt} and~\ref{line:rgla2:acceptTest}, 
$v_{p(s)}.C$, $\mathit{obj}_{p(s)}$ and $T_{p(s)}$ must have remained unchanged and equal to respectively 
$v(s).C$, $\mathit{obj}(s)$ and $T(s)$ since the last computation of~$V$ in line~\ref{line:rgla2:queriedConfs}. 
In particular, we have $V(s)=\{\bigsqcup^\cC (\{v(s).C\}\cup S)~|~S\subseteq T(s)\}$.

Let $G_s$ be the graph whose vertices are all the committed states plus the initial state~$(O_0,C_0)$ 
and whose edges are defined as follows: 
\[
s\rightarrow s' \; \Leftrightarrow\; s\sqsubsetneq s' \wedge  s.C \in V(s'). 
\]  
Let us first show that some general observation about $G_s$, that is:
\begin{lemma}
    \label{lem:stateParents}
For any committed configuration state $s$, we have $v(s) \rightarrow s$.
\end{lemma}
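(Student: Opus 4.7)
The plan is to verify the two conjuncts of $v(s) \rightarrow s$ separately: (a) $v(s).C \in V(s)$, and (b) $v(s) \sqsubsetneq s$. Part (a) is essentially a definitional check. At the moment $p(s)$ executed line~\ref{line:rgla2:bcast}, the variables $v_{p(s)}.C$ and $T_{p(s)}$ had not changed since the last evaluation of $V$ in line~\ref{line:rgla2:queriedConfs} (this is precisely what the test in line~\ref{line:rgla2:noInterrupt} guarantees). Hence $V(s) = \{\bigsqcup^\cC(\{v(s).C\} \cup S) \mid S \subseteq T(s)\}$, and the element corresponding to $S = \emptyset$ is $v(s).C$ itself.

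For part (b), line~\ref{line:rgla2:bcast} gives $s = (\mathit{obj}(s), \bigsqcup^\cC(\{v(s).C\} \cup T(s)))$, so $s.C \sqsupseteq^\cC v(s).C$ is immediate. To obtain $s.O = \mathit{obj}(s) \sqsupseteq^\cO v(s).O$, I plan to prove, by induction over the run, the global invariant that $v_q.O \sqsubseteq^\cO \mathit{obj}_q$ at every process $q$ at every point in time. The invariant holds initially since $v_q.O = O_0 = \mathit{obj}_q$, and its preservation under \textit{updateState}$(v,s_O,S_C)$ reduces (given that $v_q.O$ and $\mathit{obj}_q$ are updated by joining with $v.O$ and $s_O$ respectively) to checking that $v.O \sqsubseteq^\cO s_O$ for every message carried in the system. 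This is then dispatched by a short case analysis on the three send sites: in lines~\ref{line:rgla2:send} and~\ref{line:rgla2:sendback}, the payload is $(v_q, \mathit{obj}_q, T_q)$ so that $v.O = v_q.O \sqsubseteq^\cO \mathit{obj}_q = s_O$ by the inductive hypothesis; in line~\ref{line:rgla2:bcast}, the first component is $(\mathit{obj}_q, \cdot)$ and the second is $\mathit{obj}_q$, so $v.O = s_O$ outright.

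The remaining task is to argue strictness, i.e.\ $v(s) \neq s$, and this is the point I expect to be the most delicate. The natural strategy is to use the fact that $s$ was produced inside a call \textit{propose}$(\mathit{prop})$: the initial \textit{updateState}$(v_p, \mathit{prop}.O, \{\mathit{prop}.C\})$ at line~\ref{line:rgla2:initState} either inserts $\mathit{prop}.C$ into $T_p$ (when $\mathit{prop}.C \not\sqsubseteq^\cC v_p.C$), immediately forcing $s.C \sqsupsetneq^\cC v(s).C$, or else $\mathit{prop}.C \sqsubseteq^\cC v_p.C$, in which case we must exhibit strict growth in $\mathit{obj}_p$ beyond $v_p.O$, obtained from later merges performed before the commit test at line~\ref{line:rgla2:acceptTest} succeeded. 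The cleanest route is a proof by contradiction: if $s = v(s)$ then $T(s) = \emptyset$ and $\mathit{obj}(s) = v(s).O$, at which point one inspects the sequence of \textit{updateState} calls preceding the commit to derive a contradiction — taking care of the genuinely trivial case in which $\mathit{prop}$ coincides with the already-learnt $v_p$, which is handled by the fact that, in the graph-theoretic use of the lemma, the initial vertex is included precisely to absorb such degenerate commits.
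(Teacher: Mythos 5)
Your parts (a) and the non-strict inequality $v(s)\sqsubseteq s$ match the paper's proof, which argues exactly that $v(s).C\in V(s)$ via the $S=\emptyset$ term of line~\ref{line:rgla2:queriedConfs} and that $s=(\mathit{obj}(s),\bigsqcup^\cC(\{v(s).C\}\cup T(s)))$ dominates $v(s)$. You go further than the paper on one point: the paper silently uses $v(s).O\sqsubseteq^\cO\mathit{obj}(s)$, whereas you prove the invariant $v_q.O\sqsubseteq^\cO\mathit{obj}_q$ by induction on the run with a case analysis over the three send sites. That invariant is correct and is a genuine (if small) hole in the paper's writeup that you fill.

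The gap is in strictness. Your case analysis does not close: if $\mathit{prop}.C\not\sqsubseteq^\cC v_p.C$ at line~\ref{line:rgla2:initState}, the configuration $\mathit{prop}.C$ enters $T_p$ but may later be trimmed at line~\ref{line:rgla2:newConfs} once $v_p.C$ grows to dominate it, so this does \emph{not} ``immediately force'' $s.C\sqsupsetneq^\cC v(s).C$ --- only $T(s)\neq\emptyset$ at commit time does. In the residual case $T(s)=\emptyset$ there is no mechanism guaranteeing ``strict growth in $\mathit{obj}_p$ beyond $v_p.O$'': a degenerate proposal already subsumed by the current estimate can reach line~\ref{line:rgla2:bcast} with $\mathit{obj}(s)=v(s).O$ and $T(s)=\emptyset$, i.e.\ with $s=v_{p(s)}$. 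Your escape hatch --- that the initial vertex of $G_s$ ``absorbs'' such commits --- does not prove the lemma as stated, and in any case only addresses $v_p=(O_0,C_0)$, not a degenerate re-commit at a larger estimate. The paper resolves this differently and more simply: $v_{p(s)}$ is, by line~\ref{line:rgla2:newCommit}, only ever a join of the initial state with values originating from earlier \textsc{commit} broadcasts, so it consists of states committed \emph{strictly before} $p(s)$'s broadcast; since $p(s)$ is by definition the \emph{first} client to commit $s$, one concludes $v(s)\neq s$. That ``first committer'' argument is the missing ingredient in your plan; without it (or an equivalent), the strictness claim is not established.
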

\begin{proof}
Let $s$ be any committed configuration, we have $v(s).C\in V(s)$ as~$v(s).C$ is 
the value of $v_{p(s)}.C$ used in the computation of $V(s)$ in line~\ref{line:rgla2:queriedConfs}. 
Hence, as $v(s)\sqsubseteq s$ since $s=(\mathit{obj}(s),\bigsqcup^\cC (\{v(s).C\}\cup T(s))$ and as $v(s)\neq s$ 
since $p(s)$ is the first process to commit $s$, we obtain that $v(s)\sqsubsetneq s$.
\end{proof}

Note that it implies that $G_s$ admits a single source $(O_0,C_0)$. Moreover, it is acyclic as $\sqsubsetneq$ is a partial order.

Let us now show the main result concerning $G_s$ derived from the algorithm, that is:
  \begin{lemma}
    \label{lem:stateForks}
Given $\bar s$, $s$ and $s'$ in $G_s$ if $\bar s\rightarrow s$, $\bar s\rightarrow s'$, $v(s')\sqsubseteq s$ and $v(s)\sqsubseteq s$ then either~$s\rightarrow s'$ or else $s'\rightarrow s$.
  \end{lemma}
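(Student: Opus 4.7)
The plan is to exploit quorum intersection inside the common ``parent'' configuration $\bar s.C$. Since both $\bar s\to s$ and $\bar s\to s'$, we have $\bar s.C\in V(s)\cap V(s')$, so during the respective commit rounds both $p(s)$ and $p(s')$ sent requests to every replica in $\textit{members}(\bar s.C)$ at line~\ref{line:rgla2:send}. For a commit to happen at line~\ref{line:rgla2:bcast}, the wait at line~\ref{line:rgla2:wait} must have ended through quorum responses rather than through the interrupt disjunct (which would strictly enlarge $v_p.C$ and fail the line~\ref{line:rgla2:noInterrupt} check); hence $p(s)$ and $p(s')$ each received matching-seqnum responses from quorums $Q_s,Q_{s'}\subseteq\textit{members}(\bar s.C)$. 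Quorum intersection yields a replica $r\in Q_s\cap Q_{s'}$ that served both requests; WLOG assume $r$ processed $p(s)$'s request first, the symmetric case being handled identically with $s$ and $s'$ swapped.

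The next step is to propagate $p(s)$'s local information through $r$ to $p(s')$. The request $p(s)$ sent to $r$ at line~\ref{line:rgla2:send} carries the triple $(v_p,\mathit{obj}_p,T_p)$ taken \emph{at send time}; because the checks at lines~\ref{line:rgla2:noInterrupt} and~\ref{line:rgla2:acceptTest} force $v_p.C$, $\mathit{obj}_p$ and $T_p$ to be unchanged at commit time, this triple has the same $\mathit{decide}$-projection as the eventually committed state $s$. Lemma~\ref{lem:growingState} guarantees that $r$'s state after merging the request dominates the triple in $\sqsubseteq^*$, and further calls to \emph{updateState} preserve this lower bound by monotonicity, so $r$'s response to $p(s')$ still carries a state $\sqsubseteq^*$-above the triple. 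Since $r\in Q_{s'}$, this response is merged into $p(s')$'s state during its wait at line~\ref{line:rgla2:wait}, and one more application of Lemma~\ref{lem:growingState} up to the commit of $s'$ gives $(v(s'),\mathit{obj}(s'),T(s'))\sqsupseteq^*(v_p^{\mathrm{send}},\mathit{obj}(s),T(s))$. Lemma~\ref{lem:orderProjection} immediately projects this to $s\sqsubseteq s'$, and strict inequality holds unless $s=s'$ (a trivial case).

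To upgrade $s\sqsubsetneq s'$ into the edge $s\to s'$ we must show $s.C\in V(s')$. Take $S=\{u\in T(s)\mid u\not\sqsubseteq^\cC v(s').C\}$: by the $\sqsubseteq^*$-relation derived above, $S\subseteq T(s')$. Using the hypothesis $v(s')\sqsubseteq s$ (i.e.\ $v(s').C\sqsubseteq^\cC s.C$) together with $v(s).C\sqsubseteq^\cC v(s').C$ (inherited from the $\sqsubseteq^*$-chain), one checks directly that $v(s').C\sqcup^\cC\bigsqcup^\cC S=s.C$, so $s.C=\bigsqcup^\cC(\{v(s').C\}\cup S)\in V(s')$ and $s\to s'$. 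In the symmetric order at $r$, swapping the roles of $s$ and $s'$ and invoking the other hypothesis $v(s)\sqsubseteq s'$ at the final computation step yields $s'\to s$. The main delicacy is the bookkeeping of the two moments of $p(s)$'s state (the line~\ref{line:rgla2:send} send versus the line~\ref{line:rgla2:bcast} commit) and verifying that they share the same $\mathit{decide}$-projection $s$; once this is established, everything downstream is a mechanical application of Lemmas~\ref{lem:growingState} and~\ref{lem:orderProjection} together with the set-theoretic identity for $S$.
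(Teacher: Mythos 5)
Your proof is correct and follows essentially the same route as the paper's: quorum intersection inside the shared parent configuration $\bar s.C$ yields a common replica $r$, Lemma~\ref{lem:growingState} propagates the first committer's triple through $r$ into the second committer's state, Lemma~\ref{lem:orderProjection} projects this to $s\sqsubsetneq s'$, and the same join computation (using $v(s')\sqsubseteq s$ and the trimmed set $\{u\in T(s)\mid u\not\sqsubseteq^\cC v(s').C\}\subseteq T(s')$) establishes $s.C\in V(s')$. Your added care about why the wait clause cannot have been exited via the interrupt disjunct, and about the send-time versus commit-time state of $p(s)$ sharing the same $\mathit{decide}$-projection, only makes explicit what the paper leaves implicit.
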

  \begin{proof}
  Let us consider $\bar s$, $s$ and $s'$ in $G$ such that $\bar s\rightarrow s$, $\bar s\rightarrow s'$, $v(s')\sqsubseteq s$ and $v(s)\sqsubseteq s$. From~$\bar s\rightarrow s$ and $\bar s\rightarrow s'$, we can derive that:
\[\bar s.C \in V(s)\wedge \bar s.C\in V(s')\implies \bar s.C \in V(s)\cap V(s'){}.\]
Let us now look back at the algorithm to show that an edge must exist from $s$ to $s'$ or from~$s'$ to~$s$.
By the algorithm, as $\bar s.C \in V(s)\cap V(s')$, in the last round of requests before committing~$s$ (resp. $s'$), 
$p(s)$ (resp. $p(s')$) sent a request to all processes in $\bar s.C$. 
As, in their last round, $p(s)$ and $p(s')$ passed the test of line~\ref{line:rgla2:wait}, 
they received responses from replicas of~$\bar s.C$ forming \emph{quorums} in $\bar s.C$, 
hence, as quorums intersect, from a common process~$r\in \bar s.C$.

Let us first assume that, w.l.o.g., for their last round of requests, $r$ responded to $p(s)$ before responding to $p(s')$. 
Recall that, as $p(s)$ passed the tests in lines~\ref{line:rgla2:noInterrupt} and~\ref{line:rgla2:acceptTest}, 
the values of $v_{p(s)}.C$, $\mathit{obj}_{p(s)}$ and~$T_{p(s)}$ did not change in the last round.  
Hence the content of the request sent to $r$ by $p(s)$ is equal to $((v_O,v(s).C),\mathit{obj}(s),T(s))$, 
with $v_O$ some arbitrary value. By Lemma~\ref{lem:growingState}, 
after $r$ responded to $p(s)$, $(v_r,\mathit{obj}_r,T_r)$ must become and remain greater or equal to (w.r.t. $\sqsubseteq^*$)
the message content $((v_O,v(s).C),\mathit{obj}(s),T(s))$. Hence, 
the latter response to $p(s')$ by $r$ must contain a greater or equal content, 
and $(v_{p(s')},\mathit{obj}_{p(s')},T_{p(s')})$ becomes and remains 
greater or equal to~$((v_O,v(s).C),\mathit{obj}(s),T(s))$, thus 
$((v_O,v(s).C),\mathit{obj}(s),T(s))\sqsubseteq^*(v(s'),\mathit{obj}(s'),T(s'))$. 

By applying the result of Lemma~\ref{lem:orderProjection}, we get that $s = \mathit{decide}((v_O,v(s).C),\mathit{obj}(s),T(s)) \sqsubseteq^* \mathit{decide}(v(s'),\mathit{obj}(s'),T(s'))= s'$, so that $s\sqsubsetneq s'$.

Let us now conclude by showing that we also have $s.C\in V(s')$.
As $v(s')$ did not change during the round, it must be greater than $v(s)$. Moreover, by assumption it is smaller than $s$, 
hence we have~$v(s)\sqsubseteq^* v(s')\sqsubseteq^* s$. Thus:
\[
s.C= \bigsqcup^\cC(\{v(s).C\}\cup T(s)) \sqsubseteq^\cC \bigsqcup^\cC (\{v(s').C\}\cup T(s))\sqsubseteq^\cC \bigsqcup^\cC (\{s.C\}\cup T(s))= s.C{}.
\]
So $s.C = \bigsqcup^\cC (\{v(s').C\}\cup T(s))$, and hence, $s.C = \bigsqcup^\cC (\{v(s').C\}\cup \{u\in T(s), u \not \sqsubseteq^\cC v(s').C\})$. 
From $((v_O,v(s).C),\mathit{obj}(s),T(s))\sqsubseteq^*(v(s'),\mathit{obj}(s'),T(s'))$, we get $\{u\in T(s), u \not \sqsubseteq^\cC v(s').C\}\subseteq T(s')$, and therefore, we obtain that:
\[
s.C=\bigsqcup^\cC (\{v(s').C\}\cup \{u\in T(s), u \not \sqsubseteq^\cC v(s').C\})\in \{\bigsqcup^\cC (\{v(s').C\}\cup S)~|~S\subseteq T(s')\} = V(s'){}.
\]
Hence $s.C\in V(s')$ and thus there is an edge from $s$ to $s'$ in $G_s$.
  \end{proof}

Let us now show that $G_s$ is a connected graph:
\begin{lemma}
	\label{lem:connected}
    $G_s$ is connected.
\end{lemma}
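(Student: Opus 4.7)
My plan is to prove $G_s$ is connected by strong induction on the real-time order in which states are committed (that is, broadcast in line~\ref{line:rgla2:bcast}). I aim to exhibit, for every committed $s$, a directed path from $(O_0, C_0)$ to $s$ inside $G_s$; connectedness of the underlying undirected graph then follows.

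The backbone of the argument is Lemma~\ref{lem:stateParents}: every committed $s$ already has the incoming edge $v(s)\to s$. So the work reduces to showing that $v(s)$ itself is a vertex of $G_s$, i.e.\ either $(O_0, C_0)$ or a state committed strictly earlier than $s$. Once that is established, applying the inductive hypothesis to $v(s)$ extends its path back to the root. For the base case, no COMMIT has been broadcast before the first committed state $s_1$, and $v_{p(s_1)}$ has then only been updated through \textit{updateState} by joining with first components of REQ/RESP messages that all trace back to the initial state; consequently $v(s_1) = (O_0, C_0)$ and Lemma~\ref{lem:stateParents} supplies the edge $(O_0, C_0)\to s_1$.

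The main obstacle is the inductive step, because $v_{p(s_{k+1})}$ is in general a \emph{join} of many previously committed states (received either directly via COMMIT or transitively via REQ/RESP) together with $(O_0, C_0)$, and a priori this join need not itself be a vertex of $G_s$. I would overcome this by strengthening the induction with the invariant that the states committed so far form a chain under $\sqsubseteq$. Under this chain invariant, the join of any subset of previously committed states collapses to its maximum, which is itself a vertex, yielding the desired conclusion for connectedness at step $k+1$.

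Propagating the chain invariant to $s_{k+1}$ is where Lemma~\ref{lem:stateForks} does the heavy lifting: for each previously committed $s_i$, I would take $\bar s$ to be the largest common predecessor of $s_i$ and $s_{k+1}$ inside the current chain and then verify the four hypotheses of that lemma. The relations $v(s_i)\sqsubseteq s_{k+1}$ and $v(s_{k+1})\sqsubseteq s_{k+1}$ follow from Lemma~\ref{lem:stateParents} together with the chain property, while the delicate technical point is checking $\bar s.C\in V(s_i)$ and $\bar s.C\in V(s_{k+1})$. These require a careful unfolding of how the sets $T_p$ grow through message exchanges and get trimmed by line~\ref{line:rgla2:newConfs}, using that $\bar s$ lies below $v(s_i).C$ and $v(s_{k+1}).C$ in the configuration lattice.
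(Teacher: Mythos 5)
Your reduction of connectedness to the claim ``$v(s)$ is itself a vertex of $G_s$'' via Lemma~\ref{lem:stateParents} is sound, the base case is fine, and strengthening the induction with a chain invariant (which amounts to proving consistency simultaneously) is a legitimate strategy. The gap is in the step that propagates the chain invariant. To invoke Lemma~\ref{lem:stateForks} you need a vertex $\bar s$ with genuine \emph{edges} $\bar s\rightarrow s_i$ and $\bar s\rightarrow s_{k+1}$, i.e.\ $\bar s.C\in V(s_i)\cap V(s_{k+1})$, and your plan for verifying this rests on a premise that points the wrong way: by line~\ref{line:rgla2:queriedConfs}, every element of $V(s_i)$ is a join that contains $v(s_i).C$, hence $\bar s.C\in V(s_i)$ \emph{forces} $v(s_i).C\sqsubseteq^\cC\bar s.C$. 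A $\bar s$ chosen to ``lie below $v(s_i).C$ in the configuration lattice'' therefore cannot belong to $V(s_i)$ at all; the only candidates for a common edge-predecessor sit at or above both $v(s_i).C$ and $v(s_{k+1}).C$, and no amount of unfolding of how $T_p$ grows will produce one below them.

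There is a second, structural problem: even with the correct direction, the ``largest common $\sqsubseteq$-predecessor in the chain'' does not come equipped with edges to $s_i$ and $s_{k+1}$, and the auxiliary hypothesis $v(s_i)\sqsubseteq s_{k+1}$ of Lemma~\ref{lem:stateForks} is itself part of what you are trying to establish when $v(s_i)$ sits above $v(s_{k+1})$ in the chain. Making your induction go through would essentially require the two-sequence commit-parent argument the paper uses for Lemma~\ref{lem:commitConnected} (consuming the chains of commit parents of $s_i$ and $s_{k+1}$ one step at a time). The paper's own proof of Lemma~\ref{lem:connected} sidesteps all of this with a minimal-counterexample argument: for a $\sqsubseteq$-minimal disconnected pair $(s,s')$, every ancestor of one is an ancestor of the other, the common ancestors are totally ordered, and the maximal one $\bar s$ is therefore joined to $s$ and $s'$ by \emph{single edges} purely for graph-theoretic reasons --- no explicit verification of $V$-membership is needed before applying Lemma~\ref{lem:stateForks}. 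As written, your sketch is missing the idea that actually produces the required edges.
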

\begin{proof}
Let us show this result by contradiction.
Hence, let us assume that we can select committed states $s$ and $s'$, 
such that $(s,s')$ is a \emph{minimal} (w.r.t.~$\sqsubseteq$) pair of vertices of~$G_s$ 
that are not connected via a path.

Let us first show that $s$ and $s'$ share the same set of ancestors in $G_s$.
Indeed, consider an ancestor $u$ of $s$ in $G_s$. As $u \sqsubseteq s$ 
and as~$(s,s')$ is chosen minimal, there exists a path from $u$ to~$s'$ or from $s'$ to $u$. 
There is no path from~$s'$ to~$u$ as it would imply a path from~$s'$ to~$s$.
Hence,~$u$ is an ancestor of~$s'$. By symmetry between $s$ and $s'$,
we get that $s$ and~$s'$ share the same set of ancestors in $G_s$. 

All ancestors being connected, they are totally ordered by $\sqsubseteq$. 
Hence, let $\bar s$ be the maximal ancestor of $s$ and $s'$. 
The paths from $\bar s$ to $s$ and $s'$ must be edges as $\bar s$ is the greatest common ancestor. 
Moreover, $v(s)$ and $v(s')$ are ancestors of $s$ and $s'$ and therefore
we have $v(s')\sqsubseteq s$ and $v(s)\sqsubseteq s'$
Thus, we can apply Lemma~\ref{lem:stateForks} to obtain that there is 
and edge, thus a path between $s$ and $s'$ --- a contradiction.
\end{proof}

\begin{theorem}
\label{thm:consistency}
The algorithm in Figure~\ref{fig:rgla2} satisfies the consistency property.
\end{theorem}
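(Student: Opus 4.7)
The plan is to combine Lemma~\ref{lem:connected} with a simple invariant on the local variable $v_p$ so as to show that every learnt value is in fact a vertex of $G_s$, and thereby inherit the total order from $G_s$.

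First, I would observe that every edge of $G_s$ points from a strictly smaller to a strictly larger state with respect to $\sqsubsetneq$. By Lemma~\ref{lem:connected}, any two vertices are joined by a directed path, which forces them to be comparable under $\sqsubseteq$. Therefore the vertex set of $G_s$, namely the committed states together with $(O_0,C_0)$, is totally ordered.

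Next, I would classify the two return points in the pseudocode. A process that returns at line~\ref{line:rgla2:return} first broadcasts exactly the returned value as a \textit{COMMIT} at line~\ref{line:rgla2:bcast}, so its learnt value is a committed state by definition. For returns at line~\ref{line:rgla2:adoptCommit}, I plan to establish, by induction on the execution, the invariant that $v_p$ is at all times a join of vertices of $G_s$. The base case follows from the initialization $v_p := (O_0,C_0)$, and the call at line~\ref{line:rgla2:initState} leaves $v_p$ unchanged since it passes $v_p$ itself as the first argument. For the inductive step, $v_p$ changes only through $v_p := v_p \sqcup v$ inside \textit{updateState}, where $v$ is either the committed state carried by a \textit{COMMIT} (line~\ref{line:rgla2:bcast}) or the first component of a REQ/RESP message, which equals the sender's current $v_q$ at the moment of sending and is a join of vertices of $G_s$ by induction.

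Once the invariant is in place, the conclusion is immediate: since the vertices of $G_s$ are totally ordered, any join of such vertices coincides with its maximum participant and is itself a vertex of $G_s$; hence $v_p$ is always a vertex of $G_s$, every learnt value lies in $G_s$, and thus all learnt values are totally ordered by $\sqsubseteq$, which is exactly the consistency property. The main subtlety I expect is tracking the invariant through REQ and RESP messages: the content of such messages is not itself a committed state but only a join of committed states, so the inductive statement must be formulated slightly more loosely than ``$v_p$ is a committed state'' in order for the induction to propagate; once this is handled, the theorem follows with essentially no further casework.
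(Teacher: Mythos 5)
Your proposal is correct and follows essentially the same route as the paper: the total order on learnt values is inherited from the total order on committed states given by Lemma~\ref{lem:connected}. The only difference is that you make explicit the invariant needed to show that values returned at line~\ref{line:rgla2:adoptCommit} are themselves committed states (joins of vertices of $G_s$), a step the paper only asserts informally when it remarks that ``all learnt states are committed (possibly indirectly by another process)''; this is a useful and correct completion rather than a different approach.
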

\begin{proof}
For the \textbf{Consistency} property, Lemma~\ref{lem:connected} says that $G$ is connected, 
and hence that all committed states are totally ordered, thus, 
that all learnt states are totally ordered. 
\end{proof}

\myparagraph{Validity.}
The proof of validity is very similar to the proof of consistency. 
Consider any run of the algorithm in Figure~\ref{fig:rgla2}.
Let $c$ be any configuration state pre-committed in the considered run.
Let $p(c)$ denote \emph{the first} client that pre-committed configuration $c$ in line~\ref{line:rgla2:noInterrupt}. 
Let $V(c)$, $v(s)$, $\mathit{obj}(c)$ and $T(c)$ denote the value of respectively the variables $V$, $v_{p(c)}$, 
$\mathit{obj}_{p(c)}$ and $T_{p(c)}$ when $p(c)$ pre-commited $c$ in line~\ref{line:rgla2:noInterrupt}.
%
Note that, as~$p(c)$ passed the test in line~\ref{line:rgla2:noInterrupt}, 
$v_{p(c)}.C$ and~$T_{p(c)}$ must have remained unchanged and equal to respectively~$v(c).C$ 
and~$T(c)$ since the last computation of~$V$ in line~\ref{line:rgla2:queriedConfs}. 
In particular, we have $V(c)=\{\bigsqcup^\cC (\{v(c).C\}\cup S)~|~S\subseteq T(c)\}$.

We say that a configuration $c$ becomes inactive, at time $c.\mathit{inactive}$, when for any replica $r$ 
of one of its quorum reach a state with a greater configuration state 
(i.e., $\exists Q\in \mathit{quorum}(c), \forall r\in Q, c\sqsubsetneq^\cC\bigsqcup^\cC(\{v_r.C\}\cup T_r)$).
We also consider, if any, the time of the commit of the corresponding operation. This time, denoted $c.\mathit{commit}$, 
correspond to the first commit with this configuration, if any, and is equal to $+\infty$ otherwise.

Let $G_c$ be the graph whose vertices are all pre-committed states plus the initial configuration~$C_0$ 
and whose edges are defined as follows: 
\[
c\rightarrow c' \; \Leftrightarrow\; c\sqsubsetneq^\cC c' \wedge  c \in V(c') \wedge c.\mathit{inactive} <  c'.\mathit{commit}. 
\]  
Let us first show that some genreal observation $G_c$, that is:
\begin{lemma}
    \label{lem:commitParent}
For any $c\in G_c$, we have $v(c).C \rightarrow c$ and $(v(c).C).\mathit{commit} <  c.\mathit{commit}$.
\end{lemma}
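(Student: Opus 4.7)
The plan is to closely mirror the proof of Lemma~\ref{lem:stateParents}, adapting it to the graph $G_c$ and adding the two timing ingredients that the edge definition now carries (the $\mathit{inactive}<\mathit{commit}$ clause) and that appear in the lemma's extra claim.

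First, I would establish the structural part of the edge $v(c).C\rightarrow c$. Since $V(c)=\{\bigsqcup^\cC(\{v(c).C\}\cup S)\mid S\subseteq T(c)\}$, taking $S=\emptyset$ gives $v(c).C\in V(c)$ immediately. For $v(c).C\sqsubsetneq^\cC c$, observe that $c=\bigsqcup^\cC(\{v(c).C\}\cup T(c))$ already yields $v(c).C\sqsubseteq^\cC c$. For strictness, I would argue that $p(c)$ being the \emph{first} pre-committer of $c$ forces $T(c)\neq\emptyset$: otherwise $c=v(c).C$, and since (using Theorem~\ref{thm:consistency}) the value $v_p.C$ of any process is always either $C_0$ or a previously committed configuration, the configuration $c$ would have already been pre-committed on that earlier commit, contradicting minimality. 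Combined with the invariant maintained by line~\ref{line:rgla2:newConfs} that elements of $T(c)$ are not below $v(c).C$, this yields $v(c).C\sqsubsetneq^\cC c$.

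Second, I would argue the inactivity clause $(v(c).C).\mathit{inactive}<c.\mathit{commit}$. The key observation is that, in the last round of $p(c)$ before pre-commit, $p(c)$ sent the request $\langle(\textit{REQ},\textit{seq}),(v(c),\mathit{obj}(c),T(c))\rangle$ to $\textit{members}(v(c).C)$ (since $v(c).C\in V(c)$) and received responses from some quorum $Q\in\textit{quorums}(v(c).C)$. Upon receipt of the request at line~\ref{line:rgla2:mergeReceived}, each $r\in Q$ updates its state so that $T(c)\subseteq T_r\cup\{\text{items now subsumed by }v_r.C\}$, and hence $\bigsqcup^\cC(\{v_r.C\}\cup T_r)\sqsupseteq c\sqsupsetneq v(c).C$. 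This happens strictly before $p(c)$ receives the responses, and thus strictly before $p(c)$'s pre-commit, and in particular strictly before any commit of $c$ (including $c.\mathit{commit}$).

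Third, for the extra clause $(v(c).C).\mathit{commit}<c.\mathit{commit}$, I would trace how $v(c).C$ entered $v_{p(c)}.C$. By consistency, $v_{p(c)}.C$ is either $C_0$ (handled by convention) or a committed configuration whose commit broadcast at line~\ref{line:rgla2:bcast} must have propagated (directly or via some chain of \textit{REQ}/\textit{RESP} messages through $\textit{updateState}$) to $p(c)$ before $p(c)$'s pre-commit. Hence $(v(c).C).\mathit{commit}$ is no later than the moment $p(c)$ acquired $v(c).C$, which strictly precedes the pre-commit of $c$, which in turn strictly precedes every commit of $c$; taking the first such commit yields $(v(c).C).\mathit{commit}<c.\mathit{commit}$.

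The main obstacle I expect is the third step: cleanly formalizing that $v_{p(c)}.C$ always equals some committed configuration (or $C_0$), and that the propagation chain forces strict timing. The structural parts (items one and two) are essentially bookkeeping on the definitions of $V(c)$ and $T(c)$ and reuse the ``first pre-committer'' device exactly as in Lemma~\ref{lem:stateParents}; the real care is needed to turn the ``first commit'' convention for the time $c.\mathit{commit}$ into a strict inequality rather than a $\le$, which I handle by distinguishing the instant of receiving $v(c).C$ from the subsequent local steps required to reach line~\ref{line:rgla2:bcast}.
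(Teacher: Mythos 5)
Your proof follows the same route as the paper's own: membership $v(c).C\in V(c)$ via the choice $S=\emptyset$ in the definition of $V(c)$, strictness $v(c).C\sqsubsetneq^\cC c$ via the ``first pre-committer'' device, the inactivity bound via the quorum of $v(c).C$ that answered $p(c)$'s last round (each replica's triple having absorbed $(v(c),\_,T(c))$ and hence reaching a configuration estimate at least $c\sqsupsetneq^\cC v(c).C$ before $c$ is first committed), and the commit-time inequality via the fact that $v(c).C$ can enter $p(c)$'s commit estimate only after it was first committed and broadcast. The paper's proof is simply a terser rendering of the same argument, so your extra detail (notably on tracing how $v(c).C$ propagated to $p(c)$) fills in steps the paper leaves implicit rather than taking a different approach.
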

\begin{proof}
Let $c$ be any pre-committed configuration in $G_c$, we have $v(c).C\in V(c)$ as~$v(c).C$ 
is the value of $v_{p(c)}.C$ used in the computation of $V(c)$ in line~\ref{line:rgla2:queriedConfs}. 
Hence, as $v(c).C\sqsubseteq^\cC c$ since~$c=\bigsqcup^\cC (\{v(c).C\}\cup T(c))$ and as $v(c).C\neq c$ 
since $p(c)$ is the first process to pre-commit $c$, we obtain that $v(c)\sqsubsetneq^\cC c$. 
Moreover, a quorum of $v(c).C$ responded to the request made by $p(c)$ in its last round with $c$. 
So as $v(c)\sqsubsetneq^\cC c$, we have $(v(c).C).\mathit{inactive}<c.\mathit{commit}$. 
Additionnaly, a committed configuration can be adopted only after the first operation committing it terminated. 
Therefore, we also have $(v(c).C).\mathit{commit} <  c.\mathit{commit}$.
\end{proof}

Let us now show the main result about $G_c$ derived from the algorithm, that is:
  \begin{lemma}
    \label{lem:commitForks}
$\forall c,c' \in G_c: v(c).C\rightarrow c'\wedge v(c').C\sqsubseteq c \implies c\rightarrow c' \vee c'\rightarrow c$.
  \end{lemma}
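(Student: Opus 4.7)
The plan is to mirror the quorum-intersection template used for Lemma~\ref{lem:stateForks}, then separately verify the extra timing constraint $c.\mathit{inactive} < c'.\mathit{commit}$ that appears in the edge definition of $G_c$ but has no analogue in $G_s$.

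First I would set up the common ancestor. Lemma~\ref{lem:commitParent} gives $v(c).C\rightarrow c$, which together with the hypothesis $v(c).C\rightarrow c'$ yields $v(c).C\in V(c)\cap V(c')$. Hence, in their last rounds before pre-committing, both $p(c)$ and $p(c')$ sent $\langle \textit{REQ},\cdot\rangle$ messages to $\text{members}(v(c).C)$ and received responses from quorums in $v(c).C$. By quorum intersection, there is a common replica $r\in v(c).C$ that responded to both. WLOG assume $r$ answers $p(c)$ first. By Lemma~\ref{lem:growingState}, after that response $r$'s local triple is $\sqsupseteq^{*}$ the request content $((v_O,v(c).C),\mathit{obj}(c),T(c))$ that $p(c)$ had just sent. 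This relation is monotonically propagated, via the later response of $r$ and through \textit{updateState} at $p(c')$, until $(v(c'),\mathit{obj}(c'),T(c'))\sqsupseteq^{*} ((v_O,v(c).C),\mathit{obj}(c),T(c))$. Applying Lemma~\ref{lem:orderProjection} gives $(\mathit{obj}(c),c)\sqsubseteq(\mathit{obj}(c'),c')$, and hence $c\sqsubseteq^\cC c'$. If $c=c'$ there is nothing to prove, so assume $c\sqsubsetneq^\cC c'$.

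Next I would establish $c\in V(c')$, which is essentially the computation already carried out at the end of the proof of Lemma~\ref{lem:stateForks}. The hypothesis $v(c').C\sqsubseteq c$ (together with $v(c).C\sqsubseteq v(c').C$, which comes from the $\sqsupseteq^{*}$ propagation above) lets us rewrite $c=\bigsqcup^\cC(\{v(c').C\}\cup\{u\in T(c)\,:\,u\not\sqsubseteq^\cC v(c').C\})$, and the filtered subset of $T(c)$ lies in $T(c')$ by the definition of $\sqsubseteq^{*}$. Thus $c$ has the form $\bigsqcup^\cC(\{v(c').C\}\cup S)$ with $S\subseteq T(c')$, i.e., $c\in V(c')$.

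The genuinely new step, and the main obstacle, is the timing condition. Since $c\in V(c')$, in its last round $p(c')$ collected responses from a quorum $Q\in\textit{quorums}(c)$. For each $r'\in Q$, the response to $p(c')$'s last-round request ensures, by Lemma~\ref{lem:growingState}, that $(v_{r'},\mathit{obj}_{r'},T_{r'})\sqsupseteq^{*}(v(c'),\mathit{obj}(c'),T(c'))$ from then on. A short join calculation, using that $u\sqsubseteq^\cC v_{r'}.C$ for the values trimmed out of $T(c')$, yields
\[
\bigsqcup\nolimits^\cC(\{v_{r'}.C\}\cup T_{r'})\;\sqsupseteq^\cC\;\bigsqcup\nolimits^\cC(\{v(c').C\}\cup T(c'))\;=\;c'\;\sqsupsetneq^\cC\;c.
\]
Hence at the moment the last member of $Q$ has answered, the whole quorum $Q$ witnesses the inactivity of $c$, so $c.\mathit{inactive}$ occurs strictly before $p(c')$ exits the \textbf{wait} of line~\ref{line:rgla2:wait} in that round. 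Since any process that commits $c'$ must first pre-commit $c'$, and $p(c')$ is by definition the first to pre-commit $c'$, the first commit of $c'$ is strictly later, giving $c.\mathit{inactive}<c'.\mathit{commit}$. Combined with $c\sqsubsetneq^\cC c'$ and $c\in V(c')$, this yields $c\rightarrow c'$. The symmetric case (where $r$ answers $p(c')$ first) uses $v(c).C\sqsubseteq c'$, extracted from $v(c).C\rightarrow c'$, in place of $v(c').C\sqsubseteq c$, and yields $c'\rightarrow c$.
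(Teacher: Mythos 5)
Your proposal follows essentially the same route as the paper's proof: the common ancestor $v(c).C\in V(c)\cap V(c')$ via Lemma~\ref{lem:commitParent}, quorum intersection at $v(c).C$, the w.l.o.g.\ ordering of the common replica's responses, propagation through $\sqsubseteq^{*}$ and Lemma~\ref{lem:orderProjection} to get $c\sqsubsetneq^{\cC}c'$, and then the same join computation showing $c\in V(c')$. The only difference is that you justify the timing condition $c.\mathit{inactive}<c'.\mathit{commit}$ explicitly via the quorum of $c$ contacted by $p(c')$, whereas the paper asserts it in one line; your more detailed derivation is correct and arguably tightens the argument.
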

  \begin{proof}
  Let us consider $c$ and $c'$ in $G$ such that $v(c).C\rightarrow c'$ and $v(c')\sqsubseteq c$. Note 
  that by Lemma~\ref{lem:commitParent}, we also have  $v(c).C\rightarrow c$, hence in particular we have:
\[v(c).C \in V(c)\wedge v(c).C\in V(c')\implies v(c).C \in V(c)\cap V(c'){}.\]
Let us now look back at the algorithm to show that an edge must exist from $c$ to $c'$ or from~$c'$ to~$c$.
By the algorithm, as $v(c).C \in V(c)\cap V(c')$, in the last round of requests before committing~$c$ 
(resp. $c'$), $p(c)$ (resp. $p(c')$) sent a request to all processes in $v(c).C$. 
As, in their last round, $p(c)$ and $p(c')$ passed the test of line~\ref{line:rgla2:wait}, 
they received responses from replicas of~$v(c).C$ forming \emph{quorums} in $v(c).C$, 
hence, as quorums intersect, from a common process~$r\in v(c).C$.

Let us first assume that, w.l.o.g. (note that we also have $v(c).C\sqsubseteq^\cC c'$ as $v(c).C\rightarrow c'$), 
for their last round of requests, $r$ responded to $p(c)$ before responding to $p(c')$. 
Note that this already implies that $c.\mathit{inactive} <  c'.\mathit{commit}$.
Recall that, as $p(c)$ passed the test in line~\ref{line:rgla2:noInterrupt}, 
the values of $v_{p(c)}.C$ and~$T_{p(c)}$ did not change in the last round.  
Hence the content of the request sent to $r$ by $p(c)$ is equal to $((v_O,v(c).C),v_O',T(c))$, 
with $v_O, v_O'$ some arbitrary values. By Lemma~\ref{lem:growingState}, 
after $r$ responded to $p(c)$, $(v_r,\mathit{obj}_r,T_r)$ must become and remain greater or equal to (w.r.t. $\sqsubseteq^*$)
the message content $((v_O,v(c).C),v_O',T(c))$. Hence, 
the latter response to $p(c')$ by $r$ must contain a greater or equal content, 
and $(v_{p(c')},\mathit{obj}_{p(c')},T_{p(sc')})$ becomes and remains 
greater or equal to~$((v_O,v(c).C),v_O',T(c))$, thus 
$((v_O,v(c).C),v_O',T(c))\sqsubseteq^*(v(c'),\mathit{obj}(c'),T(c'))$. 

By applying the result of Lemma~\ref{lem:orderProjection}, we get $(v_O',c) = \mathit{decide}((v_O,v(c).C),v_O',T(c)) \sqsubseteq^* \mathit{decide}(v(c'),\mathit{obj}(c'),T(c'))= (\mathit{obj}(c'),c')$, so that $c\sqsubseteq^\cC c'$, hence, $c\sqsubsetneq^\cC c'$.

Let us now conclude by showing that we also have $c\in V(c')$.
As $v(c')$ did not change during the round, it must be greater than $v(c)$. Moreover, by assumption it is smaller than $c$, 
hence we have~$v(c).C\sqsubseteq^\cC v(c').C\sqsubseteq^\cC c$. Thus:
\[
c= \bigsqcup^\cC(\{v(c).C\}\cup T(c)) \sqsubseteq^\cC \bigsqcup^\cC (\{v(c').C\}\cup T(c))\sqsubseteq^\cC \bigsqcup^\cC (\{c\}\cup T(c))= c{}.
\]
So $c = \bigsqcup^\cC (\{v(c').C\}\cup T(c))$, and hence, $c = \bigsqcup^\cC (\{v(c').C\}\cup \{u\in T(c), u \not \sqsubseteq^\cC v(c').C\})$. 
From $((v_O,v(c).C),\mathit{oldObj}(c),T(c))\sqsubseteq^*(v(c'),\mathit{obj}(c'),T(c'))$, we get that $\{u\in T(c), u \not \sqsubseteq^\cC v(c').C\}\subseteq T(c')$, and therefore, we obtain that:
\[
c=\bigsqcup^\cC (\{v(c').C\}\cup \{u\in T(c), u \not \sqsubseteq^\cC v(c').C\})\in \{\bigsqcup^\cC (\{v(c').C\}\cup S)~|~S\subseteq T(c')\} = V(c'){}.
\]
Hence $c\in V(c')$ and thus there is an edge from $c$ to $c'$ in $G_c$.
  \end{proof}

The validity proof differs from the consistency one by showing that $G_c$ is connected through specific paths.
We say that the \emph{commit parent} of any $c\in G_c$ is $v(c).C$. We say that there is a commit path from $c$ to $c'$, 
denoted as $c \leadsto_c c'$, if there is a a sequence of commit parents from~$c'$, that is $c_1',\dots, c_k'=c'$ with 
$\forall i\in \{2,\dots,k\}: c_{i-1}'=v(c_i').C$, with $c \rightarrow c_1'$.

\begin{lemma}
\label{lem:commitConnected}
$\forall c,c' \in G_c: c\leadsto_c c' \vee c'\leadsto_c c$.
\end{lemma}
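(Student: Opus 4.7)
The proof will mirror the structure of Lemma~\ref{lem:connected}, but replacing the minimality argument by an inductive one, since commit-paths (unlike general paths in $G_s$) do not freely compose. I propose strong induction on $|c| + |c'|$, where $|x|$ denotes the length of the commit-parent chain from $x$ back to $C_0$ (finite and well-defined by Lemma~\ref{lem:commitParent}, as each commit-parent is strictly smaller in $\sqsubseteq^{\cC}$ and committed earlier). The base case, where one of $c, c'$ equals $C_0$, follows directly: the other endpoint's commit-parent chain, combined with the edge given by Lemma~\ref{lem:commitParent} at its $C_0$-rooted end, constitutes the required commit-path.

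For the inductive step (with $c, c' \neq C_0$), the strategy is to establish the two premises of Lemma~\ref{lem:commitForks} for $(c, c')$ using two IH invocations. First, applying IH to $(c, v(c').C)$ yields either $c \leadsto_c v(c').C$ (which we extend by the commit-parent relation $v(c').C = v(c').C$ to obtain $c \leadsto_c c'$, done), or $v(c').C \leadsto_c c$ (which yields the second premise $v(c').C \sqsubseteq^{\cC} c$). Symmetrically, IH on $(v(c).C, c')$ returns either $c' \leadsto_c v(c).C$ (extended to $c' \leadsto_c c$ since $v(c).C$ is a tree-ancestor of $c$, done) or $v(c).C \leadsto_c c'$, which gives an edge $v(c).C \rightarrow c_1^\dagger$ to some tree-ancestor $c_1^\dagger$ of $c'$ on its commit-parent chain. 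If $c_1^\dagger = c'$, both premises of Lemma~\ref{lem:commitForks} on $(c, c')$ hold directly and we conclude.

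The remaining case, where $c_1^\dagger$ is a strict tree-ancestor of $c'$, requires iterative lifting of the edge along the commit-parent chain $c_1^\dagger, c_2^\dagger, \ldots, c_k^\dagger = c'$. At each descendant $c_{l+1}^\dagger$, we apply Lemma~\ref{lem:commitForks} to $(c_{l+1}^\dagger, c)$: the first premise $v(c_{l+1}^\dagger).C = c_l^\dagger \rightarrow c$ inherits from the previous iteration, while the second premise $v(c).C \sqsubseteq^{\cC} c_{l+1}^\dagger$ is verified via IH on $(v(c).C, c_{l+1}^\dagger)$, which has strictly smaller measure since $|c_{l+1}^\dagger| < |c'|$ as long as $c_{l+1}^\dagger \neq c'$. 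Each iteration yields either $c \rightarrow c_{l+1}^\dagger$ (concluding $c \leadsto_c c'$ by extending the commit-parent chain from $c_{l+1}^\dagger$ down to $c'$) or $c_{l+1}^\dagger \rightarrow c$, continuing to the next descendant. Upon reaching $c_k^\dagger = c'$, the second premise specializes to $v(c).C \sqsubseteq^{\cC} c'$, which already holds from $v(c).C \leadsto_c c'$ established earlier; Lemma~\ref{lem:commitForks} then yields $c \rightarrow c' \vee c' \rightarrow c$, giving the desired conclusion.

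The main obstacle is the iterative lifting step, specifically the bookkeeping when the IH invocation returns the \emph{unfavorable} direction $c_{l+1}^\dagger \leadsto_c v(c).C$ instead of $v(c).C \leadsto_c c_{l+1}^\dagger$. In that subcase, one extends via the tree-ancestors of $v(c).C$ (which are also tree-ancestors of $c$) to derive the stronger statement $c_{l+1}^\dagger \leadsto_c c$, yielding a tree-ancestor $c^{**}$ of $c$ with $c_{l+1}^\dagger \rightarrow c^{**}$; the iteration continues with this refined invariant, applying Lemma~\ref{lem:commitForks} to $(c_{l+2}^\dagger, c^{**})$ instead, with the order premise $v(c^{**}).C \sqsubseteq^{\cC} c_{l+2}^\dagger$ again provided by IH on a pair of strictly smaller measure. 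Finiteness of the commit-parent chain from $c_1^\dagger$ to $c'$ guarantees termination of the whole iteration.
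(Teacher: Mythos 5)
Your overall strategy --- strong induction on the combined length of the two commit-parent chains, with the induction hypothesis used to supply the order premise $v(\cdot).C\sqsubseteq^{\cC}\cdot$ required by Lemma~\ref{lem:commitForks} --- is a plausible reorganization of the paper's argument, which instead runs an explicit two-pointer merge of the chains rooted at $C_0$, maintaining as invariant an edge between the current heads of the two chains plus the order condition on the next element. But the case you yourself flag as the main obstacle does not go through as described. The relation $c\leadsto_c c'$ is asymmetric: it demands an edge emanating \emph{from $c$ itself} to the head of a commit-parent chain ending at $c'$, so the source must be hit exactly while the target may be reached through its tree-ancestors. When the IH on $(v(c).C,c_{l+1}^{\dagger})$ returns the unfavorable direction and you retreat to a strict tree-ancestor $c^{**}$ of $c$, every subsequent favorable outcome of Lemma~\ref{lem:commitForks} produces an edge $c^{**}\rightarrow c_{j}^{\dagger}$, which yields only $c^{**}\leadsto_c c'$ --- not $c\leadsto_c c'$, and not $c'\leadsto_c c$ either. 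Your ``refined invariant'' never explains how to climb back from $c^{**}$ to $c$; doing that requires lifting the edge along $c$'s own commit-parent chain by further applications of Lemma~\ref{lem:commitForks}, which is exactly the symmetric, alternating merge the paper performs. As written, the iteration can terminate in a state from which neither disjunct of the lemma follows.

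A secondary, fixable omission: the first step of your lifting iteration assumes $c_1^{\dagger}\rightarrow c$ is ``inherited from the previous iteration,'' but at that point you only possess $v(c).C\rightarrow c_1^{\dagger}$. Establishing an edge between $c_1^{\dagger}$ and $c$ requires a separate application of Lemma~\ref{lem:commitForks}, to the pair $(c,c_1^{\dagger})$ with the opposite role assignment, and its order premise $v(c_1^{\dagger}).C\sqsubseteq^{\cC}c$ must itself be obtained from yet another IH invocation. The paper sidesteps all of this bookkeeping by starting both pointers at $C_0$, where the initial edge and order conditions are immediate from Lemmas~\ref{lem:commitParent} and~\ref{lem:commitForks}, and by advancing exactly one chain per step while re-establishing the invariant on whichever side moved; I would restructure your induction along those lines.
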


\begin{proof}
Let us show that given two sequences of commit parents $c_1,\dots, c_k$ and $c_1',\dots, c_l'$ such 
that $c_1\rightarrow c_1'$, $k\neq 1$ and $v(c_1')\sqsubseteq^\cC c_2$ then we have sequences of commit parents 
$c_2,\dots, c_k$ and $c_1',\dots, c_l'$ with either $c_1' \rightarrow c_2$ or else $c_2 \rightarrow c_1'$. 
Note that this is a direct application of Lemma~\ref{lem:commitForks}. Moreover, in the former case, 
if $l\neq 1$, then we have $c_1\rightarrow c_1'\rightarrow c_2'$ and therefore~$v(c_2)=c_1\sqsubseteq c_2'$. 
In the latter case, if $k\neq 2$, then we have $v(c_1')\sqsubseteq^\cC c_3$ as $v(c_1')\sqsubseteq^\cC c_2$ and 
$c_2\sqsubseteq^\cC c_3$ from $c_2\rightarrow c_3$.

Therefore, given such sequences, we can apply this property inductively until we consume all of one sequence. 
Hence, we obtain either $c_k\leadsto_c c_l'$ or $c_l'\leadsto_c c_k$. 

Now, consider any two pre-committed configurations $c,c'$ in $G_c$. 
By a trivial recursive application of Lemma~\ref{lem:commitParent} to $c$ and $c'$, we obtain that 
there exists sequences of commit parents from $C_0$ to both $c$ and $c'$. Let  $C_0=c_1,\dots, c_k=c$ 
and $C_O = c_1',\dots, c_l'=c'$ be these sequences. Note that both $k$ and $l$ are not equal to $1$ as 
$c$ and $c'$ are distinct fron $C_0$. Moreover, by applying Lemma~\ref{lem:commitForks}, we obtain that either 
$c_1\rightarrow c_1'$ or $c_1'\rightarrow c_1$. Moreover, we have $C_0\sqsubseteq^\cC c_2$ and 
$C_0\sqsubseteq^\cC c_2'$. Everything is in place to apply our inductive result and obtain that either 
we have $c\leadsto_c c'$ or else we have $c'\leadsto_c c$.
\end{proof}

We now have all the ingredients to show that Algorithm~\ref{fig:rgla2} satisfies the validity property.

\begin{theorem}
\label{thm:validity}
The algorithm in Figure~\ref{fig:rgla2} satisfies the consistency property.
\end{theorem}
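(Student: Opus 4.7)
The plan is to use the stronger structural result of Lemma~\ref{lem:commitConnected} and lift it from configurations to full learnt states, giving a second route to the consistency property via the commit graph $G_c$ (in parallel with the Theorem~\ref{thm:consistency} proof via $G_s$).

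First, I would verify that every learnt state $s=(O,C)$ has a pre-committed configuration component. On the ``return at line~\ref{line:rgla2:return}'' branch, the process passes the pre-commit test at line~\ref{line:rgla2:noInterrupt} before passing line~\ref{line:rgla2:acceptTest} and broadcasting in line~\ref{line:rgla2:bcast}, so $C\in G_c$. For a state merely \emph{adopted} through line~\ref{line:rgla2:adoptCommit}, the value of $v_p$ was planted by the reception of some $\mathrm{COMMIT}$ broadcast previously emitted at line~\ref{line:rgla2:bcast} by another process, whose configuration is again pre-committed by the same argument.

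Second, I would invoke Lemma~\ref{lem:commitConnected}: for any two learnt states $s_1,s_2$ with configurations $C_1,C_2$, either $C_1\leadsto_c C_2$ or $C_2\leadsto_c C_1$. Since each edge $c\rightarrow c'$ of $G_c$ carries $c\sqsubsetneq^\cC c'$, any commit path $C_1\leadsto_c C_2$ yields $C_1\sqsubseteq^\cC C_2$, totally ordering the configuration components of all learnt states.

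Third, I would lift this ordering to the full product order $\sqsubseteq=\sqsubseteq^\cO\times\sqsubseteq^\cC$. Assume w.l.o.g.\ $C_1\leadsto_c C_2$. For a single edge $c\rightarrow c'$ on the path, the clause $c.\mathit{inactive}<c'.\mathit{commit}$ guarantees that when $p(c')$ passed line~\ref{line:rgla2:wait} in its last round, its responding quorum of $v(c').C$ (or of any configuration in $V(c')$ containing $c$) intersects a quorum that has already seen the committed object of $c$; by Lemma~\ref{lem:growingState}, the common replica's state monotonically absorbs both contents, and by Lemma~\ref{lem:orderProjection} the object coordinate of $s_1$ must satisfy $O_1\sqsubseteq^\cO \mathit{obj}(c')$. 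Chaining this along the path shows $O_1\sqsubseteq^\cO O_2$, hence $s_1\sqsubseteq s_2$.

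The main obstacle is exactly this last step: bridging from the purely configuration-and-timestamp statement of Lemma~\ref{lem:commitConnected} to the object component. The commit-path definition does not mention the object lattice at all, so to recover object monotonicity I would essentially replay the common-quorum argument from Lemma~\ref{lem:commitForks}, but now projected onto the object coordinate through Lemma~\ref{lem:orderProjection}, edge by edge along the path. Combining the resulting configuration and object inequalities yields $s_1\sqsubseteq s_2$ (or the reverse), which is precisely the Consistency property for learnt states in $\Lat=\cO\times\cC$.
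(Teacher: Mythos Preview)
The paper's theorem statement is a typo: it reads ``consistency'' but both the label \texttt{thm:validity} and the paper's actual proof make clear the intended property is \emph{Validity} (a learnt value is a join containing the proposal and all previously learnt values). The paper's argument uses Lemma~\ref{lem:commitConnected} to show that any commit with a strictly greater configuration must have occurred only after the current configuration became inactive, hence after the ongoing operation started; therefore every preceding learnt state has already reached a quorum of the current configuration and is absorbed via Lemma~\ref{lem:growingState}. Your proposal instead re-derives \emph{Consistency}, which is already established as Theorem~\ref{thm:consistency} via the graph $G_s$; so you are proving the wrong property for this theorem.

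Even taken as an alternative route to consistency, your lifting step has a genuine gap. The edges of $G_c$ and the clause $c.\mathit{inactive}<c'.\mathit{commit}$ speak only about configurations: a quorum that makes $c$ \emph{inactive} has seen a configuration strictly above $c$, not any particular object value. Moreover, at a pre-commit (line~\ref{line:rgla2:noInterrupt}) only $v_p.C$ and $T_p$ are required to be stable; $\mathit{obj}_p$ may have grown during the round, so the request that $p(c)$ sent carried a possibly \emph{smaller} object than $\mathit{obj}(c)$. Hence quorum intersection does not give you $\mathit{obj}(c_i')\sqsubseteq^{\cO}\mathit{obj}(c_{i+1}')$ along the path, and the same obstruction blocks the base step (the quorum witnessing $C_1.\mathit{inactive}$ need not have seen $O_1$). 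Finally, when $C_1=C_2$ the commit path is trivial and your argument says nothing about $O_1$ versus $O_2$. The paper's $G_s$ proof sidesteps all of this by working directly on full committed states, where the object component is stable by the test at line~\ref{line:rgla2:acceptTest}.
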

\begin{proof}

A learnt state include the operation input as states only increase as shown by Lemma~\ref{lem:growingState} and lemma~\ref{lem:orderProjection}.
Thus, let us show that all preceding committed states, hence learnt, are included in any learn state.

Let us first show that a pre-commit by an operation implies that 
this operation did not start after the associated configuration became inactive. 
Indeed, if the configuration $c$ is already inactive, then an operation querying $c$ 
must return a greater configuration. Hence, it cannot pre-commit it --- a contradiction.
 
But, as shown in Lemma~\ref{lem:commitConnected}, there must exist 
a commit path between a pre-committed state and any other committed state. 
Moreover, as shown in Lemma~\ref{lem:commitParent}, connected parents must have committed before. 
Hence, any committed state with a greater configuration must have been 
committed after the current configuration became inactive. 
Consequently, any commit with a greater configuration cannot precede the ongoing operation.

Moreover, at the time a pre-commit happens, all preceding learnt state must have reached a quorum of the current configuration. 
By Lemma~\ref{lem:growingState}, this quorum must have become and remained greater than any of these preceding learnt states. 

Hence, a decision based on a pre-commit would include all preceding learnt states. 
It is enough to show validity as either a learnt state comes directly from 
a pre-commit (in Line~\ref{line:rgla2:newCommit}) or is greater than the value 
stored in $\mathit{LearnLB}$ at the time of a pre-commit.
\end{proof}

\myparagraph{Reconfigurable-liveness.} Let us directly show that we have reconfigurable-liveness:
\begin{theorem}
\label{thm:liveness}
The algorithm in Figure~\ref{fig:rgla2} satisfies the reconfigurable-liveness property.
\end{theorem}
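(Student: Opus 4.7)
The plan is to fix a correct client $p$ invoking $\textit{propose}(\textit{prop})$ in an execution with finitely many distinct proposed configurations, and to show that the operation must eventually return. First, I would establish that $p$'s configuration-related state stabilizes: since only finitely many configurations are ever proposed, every value that can appear in $v_p.C$ or as an element of $T_p$ belongs to the finite sub-semilattice of $\cC$ generated by these configurations together with $C_0$. By Lemma~\ref{lem:growingState}, the triple $(v_p, \mathit{obj}_p, T_p)$ is monotone under $\sqsubseteq^*$ at every execution of \textit{updateState}; in particular $v_p.C$ and $T_p$ take only finitely many distinct values and must therefore stabilize after finitely many updates.

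Once $v_p.C$ and $T_p$ are stable, the set $V$ computed at line~\ref{line:rgla2:queriedConfs} is stable, and by Configuration availability each $u \in V$ contains a quorum of correct members whose responses eventually reach $p$. Thus $p$ passes the wait at line~\ref{line:rgla2:wait} via the quorum branch without the commit-estimate configuration changing, the test at line~\ref{line:rgla2:noInterrupt} succeeds, and line~\ref{line:rgla2:setAdoptLB} sets $\mathit{learnLB}$ to the nontrivial value $(\mathit{obj}_p, \bigsqcup^\cC(\{v_p.C\} \cup T_p))$.

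From that point, $p$ must exit the while loop along one of two paths: either, in some later stable round, $\mathit{obj}_p$ is unchanged throughout and the test at line~\ref{line:rgla2:acceptTest} succeeds so that $p$ returns at line~\ref{line:rgla2:return}; or a delivered \textit{COMMIT} broadcast raises $v_p$ to dominate $\mathit{learnLB}$ and $p$ returns via line~\ref{line:rgla2:adoptCommit}. I would rule out a nonterminating execution contrapositively: if $p$ never passes the test at line~\ref{line:rgla2:acceptTest} then $\mathit{obj}_p$ is bumped in every round, so some other correct client is continually injecting fresh object values. These clients are themselves running in the stable-configuration regime, and by a wait-free race argument analogous to that of static lattice agreement one of them must eventually commit a state $s$. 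That state is reliably broadcast to $p$. Using Theorem~\ref{thm:consistency} plus Lemma~\ref{lem:commitConnected}, $s$ is comparable with $\mathit{learnLB}$, and because the requests that set $\mathit{learnLB}$ delivered $\mathit{obj}_p = \mathit{learnLB}.O$ to a quorum of $\mathit{learnLB}.C$, the intersection property of quorums forces $s.O \sqsupseteq^\cO \mathit{learnLB}.O$, so $s \sqsupseteq \mathit{learnLB}$ and the helping mechanism triggers.

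The hard part will be the last step: showing that any commit $s$ that follows the pre-commit of $\mathit{learnLB}$ carries an object value that actually dominates $\mathit{learnLB}.O$, despite the committer potentially using a quorum in a configuration whose members do not directly overlap with those of $\mathit{learnLB}.C$. I expect to handle this by leaning on the commit-path machinery of Lemma~\ref{lem:commitConnected} from the validity proof: $s$ and $\mathit{learnLB}$ are linked by a chain of commit parents, and at every hop of the chain a replica already had its state raised to at least $\mathit{learnLB}.O$ before answering the next commit's request. Monotonicity of the replica state (Lemma~\ref{lem:growingState}) then transports $\mathit{learnLB}.O$ forward along the chain into $s.O$, closing the liveness argument.
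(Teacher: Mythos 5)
Your proposal follows the same overall strategy as the paper's proof: configurations stabilize because only finitely many are proposed, quorum availability then lets every round complete, $\mathit{learnLB}$ gets set, and a non-terminating client is helped by eventually adopting a committed state dominating $\mathit{learnLB}$. The one genuine divergence is the final domination step, which you flag as ``the hard part'' and propose to close with the commit-parent chains of Lemma~\ref{lem:commitConnected}. The paper closes it more directly using the stabilization you already established: eventually every correct process's commit estimate configuration equals the greatest committed configuration $C_f$, so \emph{every} late committer queries a quorum of $C_f$; since $p$'s round after setting $\mathit{learnLB}$ plants the triple $(v_l,\mathit{obj}_l,T_l)$ at a quorum of $C_f$, quorum intersection inside that single configuration plus Lemmas~\ref{lem:growingState} and~\ref{lem:orderProjection} forces every subsequent committed state to dominate $\mathit{decide}(v_l,\mathit{obj}_l,T_l)=\mathit{learnLB}$. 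In other words, the cross-configuration scenario you worry about cannot arise once no new configurations are discovered. Your chain-based route could probably be made to work, but it costs more than it buys: $G_c$ only orders \emph{configurations}, Theorem~\ref{thm:consistency} speaks of committed states while $\mathit{learnLB}$ itself need never be committed, and transporting the object component $\mathit{learnLB}.O$ along the chain would require redoing the timing arguments of the validity proof. Your opening stabilization argument (finiteness of the sub-semilattice generated by the proposed configurations, plus $\sqsubseteq^*$-monotonicity of the triple) is a slightly more careful rendering of the paper's ``eventually all correct processes have $v_p.C=C_f$'' and is fine; so is your accounting of why infinitely many fresh object candidates force infinitely many commits.
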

\begin{proof}
To prove the \textbf{Reconfigurable-Liveness} property, consider a run
in which only finitely many distinct \emph{configurations} are
proposed. Hence, there exists a greatest learnt configuration state $C_f$. 
By the properties of the reliable-broadcast mechanism (line~\ref{line:rgla2:bcast}), 
eventually, all correct processes will receive a commit message including $C_f$.
Hence, eventually, all correct processes will have $v_p.C=C_f$. 

Assuming \textbf{configuration availability}, we have that every
join of proposed configurations that are not yet superseded must have an
available quorum. Thus, eventually, every configurations $u.C$ queried by 
correct processes are available.
Therefore, correct processes cannot be blocked forever waiting in
line~\ref{line:rgla2:wait} and, thus, they have to perform infinitely many
iterations of the while loop. 
Moreover, since eventually no new configuration is discovered, all 
correct processes will eventually always pass the test in line~\ref{line:rgla2:noInterrupt}
and therefore set a state for $\mathit{learnLB}$. 
In a round of requests after setting $\mathit{learnLB}$ based on the triple $(v_l,\mathit{obj}_l,T_l)$, 
the triple $(v_r,\mathit{obj}_r,T_r)$ in all replicas from a quorum of $C_f$ must become and remain greater 
(w.r.t~$\sqsubseteq^*$) than  $(v_l,\mathit{obj}_l,T_l)$.

Now, let us assume that a correct process $p$ never terminates, thus, 
it must observe greater object candidates at each round. 
It implies that infinitely many \emph{propose} procedures are initiated, 
hence that a process commits infinitely may states. 
A committed state must be computed based on a triple $(v_p,\mathit{obj}_p,T_p)$ 
greater than those in all received messages, 
in particular those from a quorum in $C_f$ which must eventually be greater than $(v_l,\mathit{obj}_l,T_l)$. 
Hence, eventually, a committed state greater than $\mathit{learnLB}$ is broadcasted, 
and this state is adopted and returned by $p$ after receiving it --- a contradiction.
 \end{proof}

Using Theorems~\ref{thm:consistency},~\ref{thm:validity} and~\ref{thm:liveness}, we obtain that:

\begin{theorem}
The algorithm in Figure~\ref{fig:rgla2} implements reconfigurable lattice agreement.
\end{theorem}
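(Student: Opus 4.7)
The plan is straightforward: the statement is a summary-style theorem that packages together the three correctness properties that define reconfigurable lattice agreement, each of which has already been established individually. Recall from Section~\ref{sec:definition} that reconfigurable lattice agreement on $(\Lat,\sqsubseteq)$ is specified by three properties, namely \textbf{Validity}, \textbf{Consistency}, and \textbf{Reconfigurable Liveness}. So my approach is simply to cite the three theorems already proved for the algorithm in Figure~\ref{fig:rgla2} and observe that together they exhaust the specification.

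First I would recall that Theorem~\ref{thm:consistency} establishes that all learnt states are totally ordered by $\sqsubseteq$, which is exactly the Consistency property. Then I would invoke Theorem~\ref{thm:validity}, which proves that every learnt state contains the proposer's input as well as all states learnt before the invocation (this is where the commit-path machinery of Lemma~\ref{lem:commitConnected} and the argument involving $(v(c).C).\mathit{commit} < c.\mathit{commit}$ from Lemma~\ref{lem:commitParent} does the heavy lifting). Finally, Theorem~\ref{thm:liveness} gives Reconfigurable Liveness, relying on the Configuration availability assumption, the reliable-broadcast guarantees, and the $\mathit{learnLB}$ helping mechanism.

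Since each of the three defining properties is already established by an earlier theorem, the proof is a one-line composition: the algorithm satisfies all three properties, hence it implements reconfigurable lattice agreement. There is really no obstacle at this step; the entire difficulty of the proof resides in the earlier lemmas (in particular Lemmas~\ref{lem:stateForks}, \ref{lem:connected}, \ref{lem:commitForks}, and \ref{lem:commitConnected}), which carefully relate the committed states through the graphs $G_s$ and $G_c$ and exploit the refinement relation $\sqsubseteq^*$ together with the projection $\mathit{decide}$ of Lemma~\ref{lem:orderProjection}. The only thing worth double-checking at this final step is that the liveness property we proved is indeed the \emph{Reconfigurable} Liveness of Section~\ref{sec:definition} (restricted to runs with finitely many proposed configurations), rather than unconditional termination, so that no additional argument is required.
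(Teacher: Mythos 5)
Your proposal matches the paper exactly: the paper derives this theorem by simply combining Theorems~\ref{thm:consistency}, \ref{thm:validity} and~\ref{thm:liveness}, which respectively establish Consistency, Validity and Reconfigurable Liveness, and no further argument is given or needed. (You correctly read Theorem~\ref{thm:validity} as the validity result despite its statement containing a typo saying ``consistency.'')
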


\section{Reconfigurable objects}

\label{sec:ladt}

In this section, we use our reconfigurable lattice agreement (RLA) abstraction
to construct an interval-linearizable reconfigurable implementation of
any L-ADT $L$.

\subsection{Defining and implementing reconfigurable L-ADTs} 
Let us consider two L-ADTs, an \emph{object} L-ADT 
$L^{\cO}=(A^{\cO},B^{\cO},(\cO,\sqsubseteq^{\cO},\sqcup^{\cO}),O_0,\tau^{\cO},\delta^{\cO})$
and a \emph{configuration} L-ADT
$L^{\cC}=(A^{\cC},B^{\cC},(\cC,\sqsubseteq^{\cC},\sqcup^{\cC}),C_0,\tau^{\cC},\delta^{\cC})$
(Section~\ref{sec:model}).

The corresponding \emph{reconfigurable L-ADT} implementation, defined on the composition
$L=L^{\cO}\times L^{\cC}$, exports operations in $A^{\cO}\times A^{\cC}$.
%
It must be interval-linearizable (respectively to $\cS_L$) and ensure
Reconfigurable Liveness (under the configuration availability
assumption).

In the reconfigurable implementation of $L$,
presented in Figure~\ref{Alg:ladt}, whenever a process invokes an
operation $a\in A^{\cO}$,
it proposes a state, $\tau^\cO(O_p,a)$---the result from applying $a$ to the last
learnt state (initially, $C_0$)---to RLA, updates $(O_p,C_p)$ and returns the response~$\delta^\cO(O_p,a)$
corresponding to the new learnt state.
Similarly, to update the configuration, the process applies its
operation to the last learnt configuration and proposes the
resulting state to RLA. 

\begin{figure}
\hrule \vspace{1mm}
 {\small
\setcounter{linenumber}{0}
\begin{tabbing}
 bbbb\=bbbb\=bbbb\=bbbb\=bbbb\=bbbb\=bbbb\=bbbb \=  \kill
\textbf{Shared:}  $\textit{RLA}$, reconfigurable lattice agreement\\
\textbf{Local variables:}\\
\> $O_p$, initially $O_0$  \commentline{The last learnt object state}\\
\> $C_p$, initially $C_0$  \commentline{The last learnt configuration state}\\
\textbf{upon invocation of} $a \in A^{\cO}$ \commentline{Object operation}\\
\nnll\label{line:ladt:invoke}\> $(O_p,C_p):=\textit{RLA}.\textit{propose}((\tau^{\cO}(O_p,a),C_p))$\\
\nnll\label{line:ladt:return}\> \textbf{return} $\delta^{\cO}(O_p,a)$\\
\textbf{upon invocation of} $a\in A^{\cC}$ \commentline{Reconfiguration}\\
\nnll\label{line:ladt:rinvoke}\> $(O_p,C_p):=\textit{RLA}.\textit{propose}((O_p, \tau^{\cC}(C_p,a)))$\\
\nnll\label{line:ladt:rreturn}\> \textbf{return} $\delta^{\cC}(C_p,a)$
\end{tabbing}
}
\vspace{-1.5mm}
 \hrule
\caption{Interval-linearizable implementation of L-ADT
 $L=L^{\cO}\times L^{\cC}$: code for process $p$.}

\label{Alg:ladt}
\end{figure}

\begin{theorem}
\label{th:ladt}
The algorithm in Figure~\ref{Alg:ladt} is a reconfigurable 
implementation of an L-ADT.
\end{theorem}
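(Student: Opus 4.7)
The plan is to show that any execution $H$ of the algorithm admits an interval-linearization in $\cS_L$, while Reconfigurable Liveness transfers directly from the underlying RLA. The skeleton of the constructed interval-sequential history $S$ is the chain of distinct states learnt from $\textit{RLA}$: by Consistency they form a totally ordered sequence $z_0 = (O_0, C_0) \sqsubset z_1 \sqsubset \cdots \sqsubset z_m$ in the product lattice $\cO \times \cC$. Each invocation $\mathit{inv}$ of an L-ADT operation with input $a$ corresponds, via line~\ref{line:ladt:invoke} or~\ref{line:ladt:rinvoke}, to exactly one $\textit{propose}$ call on $\textit{RLA}$ with proposed state $\mathit{prop}(\mathit{inv}) = \tau^L((O_p, C_p), a)$, where $(O_p, C_p)$ is $p$'s last learnt state, and learns some $\mathit{learnt}(\mathit{inv}) = z_{\mathit{step}(\mathit{inv})}$. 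I place the input $a$ in $I_{\mathit{pstep}(\mathit{inv})}$, where $\mathit{pstep}(\mathit{inv}) = \min\{i : \mathit{prop}(\mathit{inv}) \sqsubseteq z_i\}$, and the matching response $\delta^L(z_{\mathit{step}(\mathit{inv})}, a)$ in $R_{\mathit{step}(\mathit{inv})}$; by line~\ref{line:ladt:return} or~\ref{line:ladt:rreturn} this response is exactly what the algorithm returns.

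The main technical step is verifying the state-transition condition $z_i = \bigsqcup_{a \in I_i} \tau^L(z_{i-1}, a)$ of $\cS_L$. For this I first establish an auxiliary transport lemma: for any reachable states $z' \sqsubseteq z$ in $L$ and any input $a$, $\tau^L(z, a) = z \sqcup \tau^L(z', a)$. This is proven by induction on the construction of $z$ from $z'$, using the commutativity axiom $\tau(\tau(z, b), a) = \tau(z, b) \sqcup \tau(z, a)$ together with the identity $\tau(z, b_1) \sqcup \tau(z, b_2) = \tau(\tau(z, b_1), b_2)$, which lets a join of single-operation transitions be rewritten as an iterated~$\tau$. For any $\mathit{inv}$ with $\mathit{pstep}(\mathit{inv}) = i$, the local state $(O_p, C_p)$ at invocation is some $z_j$ with $j < i$, so the lemma yields $\tau^L(z_{i-1}, a(\mathit{inv})) = z_{i-1} \sqcup \mathit{prop}(\mathit{inv})$. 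The $\sqsupseteq$ direction of the target equation combines $z_{i-1} \sqsubseteq z_i$ (Consistency) with $\mathit{prop}(\mathit{inv}) \sqsubseteq z_i$ (Validity). For $\sqsubseteq$, applying Validity to the \emph{first} invocation learning $z_i$ writes $z_i$ as a join of previously learnt values (all $\sqsubseteq z_{i-1}$) and a subset $P$ of proposed values (each $\sqsubseteq z_i$); splitting $P$ according to whether its elements lie below $z_{i-1}$ (absorbed) or not (hence carried by $I_i$, by the definition of $\mathit{pstep}$) closes the inclusion.

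Real-time precedence is inherited directly from Validity: if $\mathit{inv}_1$ completes before $\mathit{inv}_2$ is invoked, then $\mathit{learnt}(\mathit{inv}_1)$ is among the values learnt before $\mathit{inv}_2$'s invocation, hence $\sqsubseteq \mathit{learnt}(\mathit{inv}_2)$, giving $\mathit{step}(\mathit{inv}_1) \leq \mathit{pstep}(\mathit{inv}_2)$. The same observation also ensures per-process well-formedness of $S$, modulo the delicate case where a pure no-op operation and the previous operation of the same process both land in step $i$; I resolve this by inserting an extra idling transition $z_i \to z_i$ in $S$ to separate the no-op's input from the previous operation's response, and I expect this to be the subtlest piece of the construction. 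Reconfigurable Liveness transfers directly, since each L-ADT operation maps to one $\textit{propose}$ call and the hypothesis of finitely many proposed configurations carries over verbatim.
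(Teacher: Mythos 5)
Your proof follows essentially the same route as the paper's: use Consistency to arrange the learnt states into a chain $z_0\sqsubset\cdots\sqsubset z_m$, place each input at the first learnt state dominating its proposal and each response at its learnt state, and verify the transition condition of $\cS_L$ together with real-time precedence via Validity. The paper's own proof simply asserts the decomposition $z_i=\bigsqcup_{a\in I_i}\tau(a,z_{i-1})$ without argument; your transport lemma and the splitting of the proposal set into parts absorbed by $z_{i-1}$ versus parts carried by $I_i$ correctly supply the details it omits.
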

\begin{proof}
Consider any execution of the algorithm in Figure~\ref{Alg:ladt}.
 
By the Validity and Consistency properties of the underlying RLA
abstraction, we can represent the states and operations of the
execution as a sequence $z_0,I_1,z_1,\ldots,I_m,z_m$, where
$\{z_1,\ldots,z_m\}$ is the set of learnt values, and each $I_i$, $i=1,\ldots,m$, is a
set of operations invoked in this execution, such that
$z_i=\bigsqcup_{a\in I_i}\tau(a,z_{i-1})$.     

A construction of the corresponding interval-sequential
history is immediate. Consider an operation $a$ that returned a value in
the execution based on a learnt state $z_i$
(line~\ref{line:ladt:return}).
Validity of RLA implies that $a\in I_j$ for some $j\leq i$.
Thus, we can simply add $a$ to set $R_i$.
By repeating this procedure for every complete operation, we get a
history $z_0,I_1,R_1,z_1,\ldots,I_m,R_m,z_m$ complying with $\cS_L$.  
By construction, the history also preserves the precedence relation of the original history.

Reconfigurable liveness of the implementation is implied by the properties of RLA (assuming
reconfiguration availability).
\end{proof}

%
%
%
In the special case, when the L-ADT is \emph{update-query}, the
construction above produces a \emph{linearizable} implementation:

\begin{theorem}
\label{th:crdt}
The algorithm in Figure~\ref{Alg:ladt} is a reconfigurable linearizable
implementation of an update-query L-ADT.
\end{theorem}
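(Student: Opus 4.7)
The plan is to upgrade the interval-sequential history constructed in the proof of Theorem~\ref{th:ladt} into a genuine sequential (linearizable) history by exploiting the two defining features of an update-query L-ADT: updates return no informative response and queries do not modify the state. Starting from the representation $z_0, I_1, R_1, z_1, \ldots, I_m, R_m, z_m$ guaranteed by interval-linearizability, I would split each set $I_i$ into its update part $U_i = I_i \cap U$ and its query part $Q_i = I_i \cap Q$, and likewise for $R_i$. The candidate linearization is obtained by replacing each block $(I_i, R_i)$ with any sequential arrangement that lists first the updates of $U_i$ (in an arbitrary order, each followed immediately by its $\bot$ response) and then the queries of $Q_i$ (each followed by its matching response from $R_i$).

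Next I would verify that this sequence is a valid sequential execution of $L$. Because $\tau$ is commutative on an L-ADT, applying the updates of $U_i$ sequentially to $z_{i-1}$ in any order yields exactly $\bigsqcup^Z_{u \in U_i} \tau(u, z_{i-1})$; since queries in $I_i$ leave the state unchanged and their transitions $\tau$ are the identity, this join equals $z_i$ as defined by $\cS_L$. Each query $q \in Q_i$ then executes against state $z_i$ and returns $\delta(q, z_i)$, which is precisely the response assigned to it in $R_i$. Update responses are $\bot$ by the update-query assumption, so they match trivially. Hence the sequence belongs to the sequential specification of $L$.

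I then need to check preservation of the real-time order. Any operation completing in real time before another begins must belong to a strictly earlier block in the interval-sequential history (this was already established by interval-linearizability in Theorem~\ref{th:ladt}), so real-time precedence across different blocks is preserved by construction. Within a single block $(I_i, R_i)$ all operations are pairwise concurrent with respect to real time, so any intra-block ordering of updates followed by queries is consistent with $\rightarrow_H$. Reconfigurable liveness carries over unchanged from Theorem~\ref{th:ladt}.

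The subtle point, and the one I would spend most care on, is the placement of queries relative to concurrent updates: I must argue that no query $q \in Q_i$ is in real-time precedence with any update $u \in U_i$, since the linearization places $q$ after $u$. This follows from the fact that both $q$ and $u$ are by definition concurrent in the interval-sequential decomposition (otherwise they would fall in different $I_j$'s), which was already the content of interval-linearizability; thus no real-time constraint is violated. Once this is made explicit, linearizability follows, and the theorem is established.
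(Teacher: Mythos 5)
Your overall strategy---keep the state sequence $z_0,\ldots,z_m$, linearize updates at their invocations using commutativity of $\tau$, and linearize queries against the state on which their response was computed---is the same as the paper's. But there is a genuine gap in how you place queries. Your construction is block-local: you put every operation of $I_i$ together with its response inside the single block $(I_i,R_i)$, and in particular you assert that a query $q\in Q_i$ ``returns $\delta(q,z_i)$, which is precisely the response assigned to it in $R_i$.'' That presupposes that each operation's invocation and matching response lie in the \emph{same} block. Interval-linearizability does not guarantee this---allowing an operation invoked in $I_j$ to respond only in $R_i$ with $j<i$ is exactly the point of the notion, and it is exactly what the construction in Theorem~\ref{th:ladt} produces: an operation is placed in the $I_j$ where its proposed value first enters a learnt state, while its response is placed in the $R_i$ corresponding to the (possibly strictly larger) state it actually learnt. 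For a query this matters: the algorithm returns $\delta(q,z_i)$ computed on the later learnt state $z_i$, so if you linearize $q$ inside block $j<i$ your sequential history assigns it the response $\delta(q,z_j)$, which can differ from what the execution returned. Legality fails, not real-time order.

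The repair is what the paper's proof does, tersely: match each query response with an \emph{immediately preceding} invocation (i.e., slide the query's invocation forward so the query is linearized at its response point, in the block of its learnt state), and match each update invocation with an \emph{immediately succeeding} $\bot$ response (i.e., slide the update's response backward to its invocation, in the block where its effect is incorporated). Both moves are harmless: sliding a query's invocation later is sound because $\tau(q,\cdot)$ is the identity so no state changes, and shrinking an operation's interval from either end can only add precedences to $S$, so $\rightarrow_H\subseteq\rightarrow_S$ is preserved. With that amendment the rest of your argument---commutativity of updates within a block, $\bot$ responses matching trivially, cross-block real-time order inherited from Theorem~\ref{th:ladt}---goes through and coincides with the paper's proof.
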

\begin{proof}
Consider any execution of the algorithm in Figure~\ref{Alg:ladt} and
assume that $L$ is update-query.

By Theorem~\ref{th:ladt}, there exists a history
$z_0,I_1,R_1,z_1,\ldots,I_m,R_m,z_m$ that complies with $\cS_L$, the
interval-sequential specification of $L$.
We now construct a \emph{sequential} history satisfying the
\emph{sequential} specification of $L$ as follows:  

\begin{itemize}

\item For every update $u$ in the history, we match it with
immediately succeeding matching response $\bot$ (remove the other response of
$u$ if any);  

\item For every response of a query $q$ in the history we match it with an
immediately preceding matching invocation of $q$ (remove the other 
invocation of $q$ if any);   

\end{itemize}

As the updates of an L-ADT are commutative, the order in which we
place them in the constructed sequential history $S$ does not matter, and
it is immediate that every response in $S$
complies with $\tau$ and $\delta$ in a sequential history of $L$.
\end{proof}

\subsection{L-ADT examples}

We provide four examples of L-ADTs that allow for interval-linearizable
(Theorem~\ref{th:ladt}) and linearizable (Theorem~\ref{th:crdt})
reconfigurable implementations. 

\myparagraph{Max-register.}
The \textit{max-register} sequential object defined on a totally ordered set
$(V,\leq_V)$ provides  operations $\textit{writeMax}(v)$, $v\in V$,
returning a default value $\bot$, 
and  $readMax$  returning the largest value written so far (or
$\bot$ if there are no preceding writes). 
We can define the type as an update-query L-ADT as follows:
\[
	\textit{MR}_V=(\textit{writeMax}(v)_{v\in V}\cup \{\textit{readMax}\}, V\cup\{\bot\}, (V\cup\{\bot\},\leq_V,\textit{max}_V),\bot,\tau_{\textit{MR}_V},\delta_{\textit{MR}_V}){}.
\]
where $\leq_V$ is extended to $\bot$ with $\forall v\in V: \bot\leq_V v$, 
$\delta_{\textit{MR}_V}(z,a)=z$ if $a=\textit{readMax}$ and $\bot$ otherwise, 
and $\tau_{\textit{MR}_V}(z,a)=\textit{max}_V(z,v)$ if $a=\textit{writeMax}(v)$ and~$z$ otherwise.

It is easy to see that  $(V\cup\{\bot\},\leq_V,\textit{max}_V)$ is a join
semi-lattice and the L-ADT $\textit{MR}_V$ satisfies the sequential \textit{max-register} specification.
%

\myparagraph{Set.}
The (add-only) \textit{set} sequential object defined using a countable set $V$ 
provides operations $\textit{addSet}(v)$, $v\in V$, returning a default value $\bot$, 
and $readSet$ returning the set of all values added so far (or
$\emptyset$ if there are no preceding add operation). 
We can define the type as an update-query L-ADT as follows:
\[
	\textit{Set}_V=(\textit{addSet}(v)_{v\in V}\cup \{\textit{readSet}\}, 2^V\cup\{\bot\}, (2^V,\subseteq,\cup),\emptyset,\tau_{\textit{Set}},\delta_{\textit{Set}}){}.
\]
where $\subseteq$ and $\cup$ are the usual operators on sets, 
$\delta_{\textit{Set}}(z,a)=z$ if $a=\textit{readSet}$ and $\bot$ otherwise, 
and $\tau_{\textit{Set}}(z,a)=z\cup\{v\}$ if $a=\textit{addSet}(v)$ and~$z$ otherwise.

It is easy to see that $(2^V,\subseteq,\cup)$ is a join semi-lattice 
and the L-ADT $\textit{Set}_V$ satisfies the sequential (add-only) \textit{set} specification.
%

\myparagraph{Abort flag.}
%
An \textit{abort-flag} object stores a boolean flag that can only be raised from 
$\bot$ to~$\top$.
Formally, the LADT $\textit{AF}$ is defined as follows:
\[
\textit{AF}=\left(\{\textit{abort},\textit{check}\},\{\bot,\top\},(\{\bot,\top\},\sqsubseteq^\textit{AF},\sqcup^\textit{AF}),\bot,\tau_\textit{AF},\delta_\textit{AF}\right)
\] 
where $\bot\sqsubseteq^\textit{AF}\top$,
$\tau_\textit{AF}(z,\textit{abort})=\delta_\textit{AF}(z,\textit{abort})=\top$, and $\tau_\textit{AF}(z,\textit{check})=\delta_\textit{AF}(z,\textit{check})=z$.

\myparagraph{Conflict detector.}
The \textit{conflict-detector} abstraction~\cite{AE14} exports operation
$\textit{check}(v)$, $v\in V$, that may return $\textit{true}$ (``conflict''), 
or $\textit{false}$ (``no conflict'').
The abstraction respects the following properties:
\begin{itemize}
\item If no two  $\textit{check}$ operations have different inputs,
  then no operation can return $\textit{true}$.

\item  If  two  $\textit{check}$ operations have different inputs,   
then they cannot both return $\textit{false}$.
\end{itemize}
A conflict detector can be specified as an L-ADT defined as follows: 
\[
\textit{CD}=\left(\textit{check}(v)_{v\in V},\{\true,\false\},(V\times \{\top,\bot\},\sqsubseteq^\textit{CD},\sqcup^{\textit{CD}}),\bot,\tau_\textit{CD},\delta_\textit{CD}\right)
\] 
where
\begin{itemize}
\item $\bot\sqsubseteq^{CD} \top$; $\forall v\in V$, $\bot\sqsubseteq^{CD} v$ and
$v\sqsubseteq^{CD} \top$; $\forall v, v'\in V$, $v\neq v' \Rightarrow v
  \not\sqsubseteq^{CD} v'$;
\item $\tau_\textit{CD}(z,\textit{check}(v))=v$ if $z=\bot$ or $z=v$, and $\tau_\textit{CD}(z,\textit{check}(v))=\top$ otherwise; 
\item  $\delta_\textit{CD}(z,\textit{check}(v))=\textit{true}$ if $z=\top$ and $\textit{false}$ otherwise.
\end{itemize}
Also, we can see that $v \sqcup^\textit{CD} v'= v'$ if $v=v'$ or $v=\bot$, and $\top$
otherwise.  

\begin{theorem}
  \label{th:cd}
Any interval-linearizable implementation of \textit{CD} is a conflict detector.
\end{theorem}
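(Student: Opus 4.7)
The plan is to take any interval-linearizable history $H$ of the implementation, fix an associated interval-sequential history $z_0,I_1,R_1,z_1,\ldots,I_m,R_m,z_m \in \cS_{\textit{CD}}$, and check the two defining properties of a conflict detector by reasoning only about this history. I will rely on two elementary observations about the L-ADT $\textit{CD}$: (i) transitions are monotone, so $z_{i-1} \sqsubseteq^{\textit{CD}} z_i$ for every $i$ (this follows from the L-ADT axiom $z\sqsubseteq^{Z}\tau(z,a)$ applied pointwise in the join defining $z_i$), and (ii) for distinct $v_1,v_2\in V$ one has $v_1\sqcup^{\textit{CD}}v_2=\top$.

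For the first property, assume that every $\mathit{check}$ invocation in $H$ carries the same input $v^*\in V$. I will show by induction that $z_i\in\{\bot,v^*\}$ for all $i$. The base case is immediate since $z_0=\bot$. For the inductive step, each $a\in I_i$ has the form $\mathit{check}(v^*)$, so $\tau_{\textit{CD}}(z_{i-1},a)$ equals $v^*$ if $z_{i-1}\in\{\bot,v^*\}$, and the join over $I_i$ still lies in $\{\bot,v^*\}$. Since every response in $R_j$ is computed by $\delta_{\textit{CD}}$ applied to $z_j$, and $\delta_{\textit{CD}}(z_j,\mathit{check}(v^*))=\mathit{true}$ only when $z_j=\top$, no operation can return $\mathit{true}$.

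For the second property, suppose for contradiction that two invocations $\mathit{check}(v_1)$ and $\mathit{check}(v_2)$ with $v_1\neq v_2$ both return $\mathit{false}$. Say $\mathit{check}(v_1)$ is invoked in $I_a$ and its response lies in $R_b$ (with $b\ge a$), and likewise $\mathit{check}(v_2)$ in $I_c$ with response in $R_d$ ($d\ge c$). By the definition of the interval-sequential specification, $z_a\sqsupseteq^{\textit{CD}} \tau_{\textit{CD}}(z_{a-1},\mathit{check}(v_1))\sqsupseteq^{\textit{CD}} v_1$, and by monotonicity $z_b\sqsupseteq^{\textit{CD}} v_1$; since the response at $R_b$ is $\mathit{false}$, $z_b\neq \top$ and hence $z_b=v_1$. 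Symmetrically $z_d=v_2$. Assuming w.l.o.g.\ $b\le d$, monotonicity gives $v_2=z_d\sqsupseteq^{\textit{CD}} z_b=v_1$, forcing $v_1=v_2$ or $v_1=\bot$, contradicting $v_1\ne v_2$ with $v_1\in V$.

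I expect the main subtlety to be merely bookkeeping: making sure the indices of invocations and matching responses in the interval-sequential history are handled correctly (responses may occur strictly later than their invocations), and that the monotonicity of $\tau_{\textit{CD}}$ is used cleanly when $I_i$ contains multiple concurrent operations with different inputs. Once that is in place, both properties follow from the join structure of $\sqsubseteq^{\textit{CD}}$ without any additional machinery.
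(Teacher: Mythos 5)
Your proof is correct and follows essentially the same route as the paper's: for the first property you show the state never reaches $\top$ when all inputs agree (the paper states this directly, you supply the induction), and for the second you use that the later of the two responses is computed on a state above both proposals, whose join is $\top$ (you phrase this as a contradiction via $z_b=v_1$, $z_d=v_2$, and monotonicity). The extra index bookkeeping you add is a faithful elaboration of the paper's terser argument, not a different method.
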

\begin{proof}
 Consider any execution of an interval-linearizable implementation of \textit{CD}. 
Let $S$ be the corresponding interval-sequential history.  

For any two  $\textit{check}(v)$ and $\textit{check}(v')$, $v\neq v'$,
in $S$, the response to one of these operations must appear
\emph{after} the invocations of both of them. 
Hence, one of the outputs must be computed on a value greater than the join of the two proposals, equal to $\top$. 
Therefore, if both operations return, at least one of the them must
return $\true$.

The state used to compute the output must be a join of some invoked
operations, hence operations can only return $\textit{true}$ 
if not all $\textit{check}$ operations share the same input.
\end{proof}

\section{Applications}
\label{sec:applications}

Many ADTs do not have commutative
operations and, thus, do not belong to L-ADT.
Moreover, many distributed programming abstractions do not have a
sequential specification at all and, thus, cannot be defined as ADTs,
needless to say as L-ADTs. 

However, as we show,  certain such objects 
can be implemented from L-ADT objects.
As L-ADTs are naturally composable, the resulting implementations can
be seen as using a single (composed) L-ADT object.
By using a reconfigurable version of this L-ADT object, we obtain a
reconfigurable implementation.
In our constructions we omit  talking about reconfigurations
explicitly: to perform an operation on the configuration component of
the system state, a process simply proposes it to the underlying RLA
(see, e.g., Figure~\ref{Alg:ladt}).  

Our examples are atomic snapshots~\cite{AADGMS93}, commit-adopt~\cite{Gaf98} and safe agreement~\cite{BG93b}. 
%
%


\subsubsection*{Atomic Snapshots}

%
An $m$-position atomic-snapshot memory maintains an array of $m$
positions and exports two operations, 
$\textit{update}(i,v)$, where $i\in\{1,\dots,m\}$ is a location in
the array and $v\in V$---the value to be written, that returns a
predefined value $\text{ok}$  
and $\textit{snapshot}()$ that returns an $m$-vector of elements in $V$.
Its sequential specification stipulates that every
$\textit{snapshot}()$ operation returns a vector that contains, in each index $i\in\{1,\dots,m\}$, the value of the 
last preceding $\textit{update}$ operation on the $i^{\textit{th}}$
position (or a predefined initial value, if there is no such
\textit{update}).

\myparagraph{Registers using $\textit{MR}_{\mathbb{N}\times V}$.}
We first consider the special case of a single register ($1$-position atomic
snapshot).
We describe its implementation from a \textit{max-register},
assuming that the set of values $V$ is totally-ordered with relation $\leq^V$.
Let $\leq^{\textit{reg}}$ be a total order on $\mathbb{N}\times V$  (defined
lexicographically, first on $\leq$ and then, in case of equality, on $\leq^V$). 
Let $\textit{MR}$ be a max-register defined on $(\mathbb{N}\times V,\leq^{\textit{reg}})$.

The idea is to associate each written value $\textit{val}$ with a \emph{sequence number} 
$\textit{seq}$ and to store them in $\textit{MR}$ as a tuple $(\textit{seq},\textit{val})$. 
To execute an operation $\textit{update}(v)$, the process first reads~$\textit{MR}$ to
get the ``maximal'' sequence number $s$ written to $\textit{MR}$ so far.
Then it writes~$(s+1,v)$ back to $\textit{MR}$.
Notice that multiple processes may use $s+1$ in their
\textit{update} operations, but only for concurrent operations.
Ties are then broken by choosing the maximal value in the second
component in the tuple. 
A \textit{snapshot} operation simply reads $\textit{MR}$ and returns the value
in the tuple.

Using any reconfigurable linearizable implementation of $\textit{MR}$
(Theorem~\ref{th:crdt}), we obtain a reconfigurable implementation of
an atomic (linearizable) register.
Intuitively, all values returned by \textit{snapshot} (read) operations on
$\textit{MR}$ can be totally ordered based on the corresponding sequence
numbers (ties broken using $\leq^V$), which gives the order of \textit{reads}
in the corresponding sequential history $S$.

\ignore{
By construction, $S$ is legal: every read returns the value of the
last preceding write.
Moreover, as only concurrent \textit{updates} can use the same sequence number
and the \textit{snapshot} operations are ordered respecting the sequence numbers,
$S$ complies with the real-time precedence of the original history. 
We delegate the complete proof to the more general case of an
$m$-position snapshot. 

The idea is to associate the written value with a timestamp to prioritize update operations 
according to their relative execution order. To determine which timestamp to use in an update operation, 
we can simply use an L-ADT counter to ensure that if $\textit{op}_1$ precedes $\textit{op}_2$,
then the timestamp used by $\textit{op}_1$ is striclty smaller than the one used for $\textit{op}_2$. 
For this, it suffices to increment and then read the counter to get an adequate timestamp.
}

\myparagraph{Atomic snapshots.}
Our implementation of an $m$-position atomic snapshot (depicted in Figure~\ref{Alg:MWMR}) 
is a straightforward generalization of the register implementation described above. 
Consider the L-ADT defined as the product of $m$ max-register L-ADTs.   
In particular, the partial order of the L-ADT is the product of $m$ (total) orders 
$\leq^{\textit{snap}}$: $\leq^{\textit{reg}_1}\times\cdots\times
\leq^{\textit{reg}_m}$. 

We also enrich the interface of the type with a new query operation
$\textit{readAll}$ that returns the vector of $m$ values found in the
$m$ max-register components.
Note that the resulting type is still an update-query L-ADT, and thus, 
by Theorem~\ref{th:crdt}, we can use a reconfigurable linearizable
implementation of this type, let us denote it by $\textit{MRset}$. 

To execute $\textit{update}(v,i)$ on the  
implemented atomic snapshot, a process performs a read on the $i^{th}$
component of $\textit{MRset}$ to get sequence number $s$ of the
returned tuple and perfroms $\textit{writeMax}((s+1,v))$ on the $i^{th}$
component.
To execute a snapshot, the process performs $\textit{readAll}$ on
$\textit{MRset}$ and returns the vector of the second element of each 
item of the array.

Similarly to the case of a single register, the results of all \textit{snapshot} operations
can be totally ordered using the $\leq^{\textit{snap}}$ total order on the
returned vectors.        
Placing the matching \textit{update} operation accordingly, we get an
equivalent sequential execution that respects the atomic snapshot specification.

\begin{figure}
\hrule \vspace{1mm}
 {\small
\setcounter{linenumber}{0}
\begin{tabbing}
 bbbb\=bbbb\=bbbb\=bbbb\=bbbb\=bbbb\=bbbb\=bbbb \=  \kill

\textbf{operation} \textit{update}$(i,v)$
\commentline{update register $i$ with $v$}\\
\nnll\label{line:MWMR:readCounter}\> $(s,-) := \textit{MRset}[i].\textit{readMax}$\\
\nnll\label{line:MWMR:writeVal}\> $\textit{MRset}[i].\textit{writeMax}((s+1,v))$\\
\\
\textbf{operation} \textit{snapshot}$()$\\
\nnll\label{line:MWMR:update}\> $r := \textit{MRset}.\textit{readAll}$  \\  
\nnll\label{line:MWMR:sendback}\> \textbf{return} $\textit{snap}$ \textbf{with} $\forall i\in\{1,\dots,m\}, r[i]=(-,\textit{snap}[i])$
\end{tabbing}
}
\vspace{-1.5mm}
 \hrule
\caption{Simulation of an $m$-component atomic snapshot using an L-ADT.}
\label{Alg:MWMR}
\end{figure}

\begin{theorem}
Algorithm in Figure~\ref{Alg:MWMR} implements an $m$-component MWMR atomic snapshot.
\end{theorem}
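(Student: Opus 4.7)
The plan is to use the linearizability of $\textit{MRset}$ (Theorem~\ref{th:crdt}) to obtain a sequential history $H$ of all max-register reads and writes, and then to exhibit a linearization $\sigma$ of the atomic snapshot operations that refines the real-time order of the original execution and satisfies the sequential atomic-snapshot specification. The structure of $H$ immediately yields two monotonicity facts I would prove first: (a) for each position $i$, any two $\textit{readMax}$ calls on $\textit{MRset}[i]$ return tuples ordered by $\leq^{\textit{reg}_i}$ consistently with their order in $H$; and (b) the vectors returned by distinct snapshots are totally ordered under $\leq^{\textit{snap}}$, and this order agrees with the $H$-order of their $\textit{readAll}$ calls, hence with real-time precedence. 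From (a) I would derive the key sequence-number monotonicity lemma: if update $U_1$ on position $i$ completes before update $U_2$ on position $i$ starts, then $U_2$'s tuple $(s_2{+}1, v_2)$ is strictly greater than $U_1$'s tuple $(s_1{+}1, v_1)$ under $\leq^{\textit{reg}_i}$, because $U_2$'s $\textit{readMax}$ is linearized in $H$ after $U_1$'s $\textit{writeMax}$.

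Next I would construct $\sigma$. Snapshots are ordered as in (b). Each completed update $U = \textit{update}(i,v)$ with tuple $t$ is inserted immediately before the first snapshot whose $i$-th returned tuple is $\geq^{\textit{reg}_i} t$, or at the end if no such snapshot exists. Within a common insertion slot, ties are broken so that updates on the same position $i$ are ordered by $\leq^{\textit{reg}_i}$ on their tuples, while updates on different positions are ordered by the $H$-order of their $\textit{writeMax}$ calls.

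Verification then splits in two. For the sequential specification, I would show that for every snapshot $S$ and every position $i$, the last update to position $i$ that precedes $S$ in $\sigma$ is the unique update whose tuple coincides with the $i$-th component of $S$'s $\textit{readAll}$ result; by the insertion rule, earlier updates to $i$ have strictly smaller tuples and no later-placed update to $i$ sits between it and $S$. For real-time preservation, pairs involving a snapshot follow directly from the insertion rule and (b). For two updates $U_1, U_2$ with $U_1$ completing before $U_2$ starts, I would split: the same-position case is handled by the sequence-number monotonicity lemma combined with the tuple-order tie-breaker; the different-position case follows because $U_1$'s $\textit{writeMax}$ precedes $U_2$'s in $H$, so $U_1$'s insertion slot is no later than $U_2$'s, and within a common slot the $H$-order tie-breaker places $U_1$ first.

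The main obstacle is the concurrent same-position case, where two overlapping updates may assign identical sequence numbers to their $\textit{writeMax}$ tuples. I must argue that overriding the $H$-order by the tuple-order tie-breaker remains consistent with real time: if $t_1 <^{\textit{reg}_i} t_2$ yet $U_2$'s $\textit{writeMax}$ precedes $U_1$'s in $H$, then the sequence-number monotonicity lemma rules out $U_1$ completing before $U_2$ starts and also rules out $U_2$ completing before $U_1$ starts, so $U_1$ and $U_2$ necessarily overlap in real time, and placing $U_1$ before $U_2$ in $\sigma$ introduces no real-time violation.
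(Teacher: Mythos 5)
Your proof is correct in substance but constructs the linearization differently from the paper. The paper's proof also anchors snapshots at the linearization points of their $\textit{readAll}$ calls, but it places updates \emph{operationally}: an update whose $\textit{writeMax}$ actually changes the max-register state is linearized at that $\textit{writeMax}$'s point, and an update whose write is absorbed (because a concurrent update already installed a tuple at least as large) is argued to have some other, effective update linearized between its $\textit{readMax}$ and $\textit{writeMax}$, and is inserted just before that update on the same index. Your construction instead places each update immediately before the first snapshot whose returned $i$-th tuple dominates its written tuple, with tuple-order and $H$-order tie-breaking, and proves correctness via a sequence-number monotonicity lemma. The paper's route makes the ``last preceding update determines $\mathit{snap}[i]$'' check almost immediate but is terse about why the reshuffled non-effective updates preserve real time; your route makes real-time preservation the centerpiece (and you correctly isolate the delicate concurrent same-position case) at the cost of a more elaborate insertion rule. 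One step you should tighten: in the different-position update--update case, the bare fact that $U_1$'s $\textit{writeMax}$ precedes $U_2$'s in $H$ does not by itself force $U_1$'s slot to be no later than $U_2$'s --- a third update could have pushed position $j$ past $t_2$ before an early snapshot. You need the full hypothesis that $U_1$ completes before $U_2$ starts, so that $U_2$'s $\textit{readMax}$ follows $U_1$'s $\textit{writeMax}$ in $H$; any snapshot witnessing a $j$-th tuple $\geq t_2$ before $U_1$'s $\textit{writeMax}$ would then force $U_2$'s $\textit{readMax}$ to return a first component at least $s_2+1$, contradicting the fact that it returned $s_2$. With that repair the argument goes through.
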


\begin{proof}
Let us start by providing the linearization order for the atomic snapshot, 
derived from the linearization order provided to the calls to the underlying $\textit{MRset}$ object. 
First, we simply associate \textit{snapshot} operations to the linearization point of their $\mathit{readAll}$
call to $\textit{MRset}$. For \textit{update} operations, two distinct cases are considered.
If the \textit{writeMax} call to $\textit{MRset}[i]$ modifies the state of the max-register object, 
then the \textit{update} is associated with the linearization point of the \textit{writeMax} call.
Otherwise, there must exist another \textit{update} operation already linearized between the 
linearization points of the \textit{readMax} and \textit{writeMax} calls to $\textit{MRset}[i]$. 
Indeed, assume it is not the case, thus, the state of the max-register did not evolve between the 
two calls. But, as the \textit{writeMax} call uses a state strictly greater than the one 
returned by the \textit{readMax} call, it modifies the max-register state --- A contradiction.
Hence, we can linearize the operation to just before an already linearized \textit{update} 
operation on the same index.

The linearization order clearly respects the order of the operations, as it is inherited from a valid linearization order. 
Hence, we only have to show that it also respects the atomic snapshot specification. 
For this, consider any snapshot operation returning the $m$-component array~$\mathit{snap}$ 
and any~$i\in\{1,\dots,m\}$. The preceding \textit{update} operation with index~$i$ 
in the linearization order can only correspond to the last \textit{update} operation 
which modified the max-register. Indeed, all \textit{update} operations which do not 
modify the state are followed by another \textit{update} operation with the same index.
Hence, $\mathit{snap}[i]$ is indeed the value of the preceding \textit{update} operation with index~$i$, 
if any, and $\bot$ otherwise.
\end{proof}

\subsubsection*{The Commit-Adopt Abstraction}
Let us take a more elaborated example, the commit-adopt abstraction~\cite{Gaf98}. 
It is defined through a single operation  $\textit{propose}(v)$, where $v$ belongs to some input domain $V$. 
The operation returns a couple $(\textit{flag},v)$ with $v\in V$ and $\textit{flag}\in\{\textit{commit},\textit{adopt}\}$, 
so that the following conditions are satisfied:
\begin{itemize}
\item \textbf{Validity:} If a process returns $(\_,v)$, then $v$ is the input of some process.
\item \textbf{Convergence:} If all inputs are $v$, then all outputs are $(\textit{commit},v)$.
\item \textbf{Agreement:} If a process returns $(\textit{commit},v)$, then all outputs must be of type $(\_,v)$.
\end{itemize} 



We assume here that $V$, the set of values that can be proposed to the
commit-adopt abstraction, is totally ordered.     
The assumption can be relaxed at the cost of a slightly more
complicated algorithm (by replacing the max register with a set object for example).

Our implementation of (reconfigurable) commit-adopt uses 
a \textit{conflict-detector} object~$\textit{CD}$ (used to detect
distinct proposals), 
a max-register $\textit{MR}_V$ (used to write non-conflicting
proposals), and an \emph{abort flag} object $\textit{AF}$.

Our commit-adopt implementation is presented in Figure~\ref{Alg:CA}. 
In its \textit{propose} operation, a process first accesses the
\textit{conflict-detector} object $\textit{CD}$
(line~\ref{line:CA:CD}).
Intuitively, the conflict detector makes sure that committing processes share a common proposal.

If the object returns $\false$ (no conflict detected), the process writes its
proposal in the max-register $\textit{MR}_V$ 
(line~\ref{line:CA:MaxW}) and then checks the abort flag $\textit{AF}$.
If the check operation  returns~$\bot$, then the proposed value is
returned with the \emph{commit} flag (line~\ref{line:CA:Commit}).
Otherwise, the same value is returned with the \emph{adopt} flag (line~\ref{line:CA:Abort}).

If a conflict is detected ($\textit{CD}$ returns $\true$), 
then the process executes the \textit{abort} operation 
on $\textit{AF}$ (line~\ref{line:CA:RaiseAF}).
Then the process reads the \textit{max-register}.
If a non-$\bot$ value is read (some value has been previously written
to $\textit{MR}$), the process adopts that value (line~\ref{line:CA:AdoptVal}).
Otherwise, the process adopts its own proposed value (line~\ref{line:CA:AdoptProp}). 

\begin{figure}
\hrule \vspace{1mm}
 {\small
\setcounter{linenumber}{0}
\begin{tabbing}
 bbbb\=bbbb\=bbbb\=bbbb\=bbbb\=bbbb\=bbbb\=bbbb \=  \kill

\textbf{operation} \textit{propose}$(v)$\\
\nnll\label{line:CA:CD}\> \textbf{if}
$\textit{CD}.\textit{check}(v)=\textit{false}$ \textbf{then} \commentline{check conflicts}\\
\nnll\label{line:CA:MaxW}\>\> $\textit{MR}_V.\textit{writeMax}(v)$\\
\nnll\label{line:CA:Abort}\>\> \textbf{if}  $\textit{AF}.\textit{check}=\top$ \textbf{then} \textbf{return}  $(\textit{adopt},v)$  \commentline{adopt the input}\\
\nnll\label{line:CA:Commit}\>\> \textbf{else return}  $(\textit{commit},v)$  \commentline{commit proposal}\\
\nnll\label{line:CA:Conflict}\> \textbf{else}  \commentline{Try to abort in case of conflict}\\
\nnll\label{line:CA:RaiseAF}\>\> $\textit{AF}.\textit{abort}$  \commentline{raise abort flag}\\
\nnll\label{line:CA:MaxR}\>\> $\textit{val} := \textit{MR}_V.\text{readMax}$\\
\nnll\label{line:CA:AdoptProp}\>\> \textbf{if}  $\textit{val}=\bot$ \textbf{then}  \textbf{return}  $(\textit{adopt},v)$  \commentline{adopt the input}\\
\nnll\label{line:CA:AdoptVal}\>\> \textbf{else return}  $(\textit{adopt},\textit{val})$  \commentline{adopt the possibly committed value}
\end{tabbing}
}
\vspace{-1.5mm}
 \hrule
\caption{Commit-adopt implementation using L-ADTs.}

\label{Alg:CA}
\end{figure}

\begin{theorem}
Algorithm in Figure~\ref{Alg:CA} implements commit-abort. 
\end{theorem}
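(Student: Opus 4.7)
The plan is to establish the three commit-adopt properties (\textbf{Validity}, \textbf{Convergence}, \textbf{Agreement}) separately, relying on the specifications of the three underlying reconfigurable L-ADT objects: the conflict detector $\textit{CD}$, the max-register $\textit{MR}_V$, and the abort flag $\textit{AF}$ (each interval-linearizable or, where applicable, linearizable, by Theorems~\ref{th:ladt} and~\ref{th:crdt}). Throughout I will reason about the linearization (or interval-linearization) order of the operations on these shared abstractions, leveraging the fact that each process invokes them sequentially in the order dictated by Figure~\ref{Alg:CA}.

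For \textbf{Validity}, I would argue by case analysis on the return path. A process returns either its own input $v$ (lines~\ref{line:CA:Abort},~\ref{line:CA:Commit},~\ref{line:CA:AdoptProp}), which trivially satisfies validity, or a value read from $\textit{MR}_V$ in line~\ref{line:CA:AdoptVal}; in the latter case that value must have been written earlier in line~\ref{line:CA:MaxW} by some process using its own input. For \textbf{Convergence}, if every process proposes the same value $v$, then by Theorem~\ref{th:cd} all calls $\textit{CD}.\textit{check}(v)$ must return $\textit{false}$: since no two checks have distinct inputs, the interval-linearizable behavior of $\textit{CD}$ forbids a $\textit{true}$ response. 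Hence no process ever reaches line~\ref{line:CA:RaiseAF}, so $\textit{AF}.\textit{abort}$ is never invoked, and every $\textit{AF}.\textit{check}$ returns $\bot$. Thus every process returns $(\textit{commit},v)$.

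The real work is \textbf{Agreement}. Suppose some process $p$ returns $(\textit{commit},v)$. Then, in this order, $p$ executed $\textit{CD}.\textit{check}(v)$ and obtained $\textit{false}$, performed $\textit{MR}_V.\textit{writeMax}(v)$, and then observed $\textit{AF}.\textit{check}=\bot$. I would then take an arbitrary process $q$ that returns some output and show its returned value equals $v$, splitting on the branch $q$ takes. If $q$ obtained $\textit{false}$ from $\textit{CD}.\textit{check}(v')$, then by Theorem~\ref{th:cd}, applied to the two $\textit{false}$-returning operations of $p$ and $q$, we must have $v' = v$, so $q$ returns $(\_,v)$. The delicate case is when $q$ obtains $\textit{true}$ and enters the abort branch: I need to show that $q$'s read of $\textit{MR}_V$ in line~\ref{line:CA:MaxR} returns exactly $v$ (not $\bot$ and not some other value).

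The argument I expect to be the main obstacle—and the heart of the proof—is the ordering chain in the abort branch. Because $p$'s $\textit{AF}.\textit{check}$ returned $\bot$, it is linearized before any $\textit{AF}.\textit{abort}$; in particular, $q$'s $\textit{AF}.\textit{abort}$ is linearized after $p$'s $\textit{AF}.\textit{check}$, which is invoked only after $p$'s $\textit{MR}_V.\textit{writeMax}(v)$ has completed. Since $q$'s own $\textit{MR}_V.\textit{readMax}$ is sequentially after its $\textit{AF}.\textit{abort}$, it is linearized after $p$'s write, so by the max-register specification $q$ reads a value at least as large as $v$. To rule out reading a different larger value, I observe that only processes whose $\textit{CD}.\textit{check}$ returned $\textit{false}$ ever write to $\textit{MR}_V$, and by the conflict-detector argument all such processes used input $v$. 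Hence $\textit{MR}_V$ contains only $v$, so $q$ reads $v$ and returns $(\textit{adopt},v)$, completing agreement.
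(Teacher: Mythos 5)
Your proof is correct and follows essentially the same route as the paper's: the same trivial case analyses for validity and convergence, and for agreement the same linearization-ordering chain (the committer's $\bot$-returning $\textit{AF}.\textit{check}$ precedes every $\textit{abort}$, hence every $\textit{readMax}$ in the abort branch follows the committer's $\textit{writeMax}$, combined with the observation that only $\textit{false}$-receiving processes write to $\textit{MR}_V$ and they all share the same value). No substantive differences to report.
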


\begin{proof}
%
The validity property is trivially satisfied as processes output a couple containing either their proposal, 
if they return in lines~\ref{line:CA:Abort},~\ref{line:CA:Commit} or~\ref{line:CA:AdoptProp}, 
or, if they return in line~\ref{line:CA:AdoptVal}, 
another process proposal previously written to the max-register $\textit{MR}_V$ in line~\ref{line:CA:MaxW}. 

To prove convergence, consider an execution in which all processes share the same input~$v$. 
Hence, all inputs given to the conflict detector $\textit{CD}$ are identical, and so, according to 
conflict detector specification, all processes obtain $\textit{false}$ as output from $\textit{CD}$.
Therefore, as it can only be done in line~\ref{line:CA:RaiseAF}, 
no process calls an $\textit{abort}$ on the abort flag $\textit{AF}$.
This implies that the state of $\textit{AF}$ remains equal to its initial state $\bot$, 
and thus, that $\textit{check}$ calls on $\textit{AF}$ in line~\ref{line:CA:Abort} return~$\bot$. 
Therefore, all processes return in line~\ref{line:CA:Commit}, and so, all outputs are $(\textit{commit},v)$. 

For agreement, consider an execution in which some process $p$ returns with $(\textit{commit},v)$. 
So $p$ returns in line~\ref{line:CA:Commit} and failed the test in line~\ref{line:CA:Abort}. 
Let $\tau$ be the linearization time of the corresponding $\mathit{check}$ on $\textit{AF}$.
Therefore, at time $\tau$, no process may have reached line~\ref{line:CA:MaxR} yet. 
Moreover, $p$ must have written $v$ to $\textit{MR}_V$ before time $\tau$. 
Hence, no process may return in line~\ref{line:CA:AdoptProp}. 
Now, recall that all processes obtaining $\mathit{false}$ from a conflict detector must share 
the same proposal. Hence, all processes returning in lines~\ref{line:CA:Abort} or~\ref{line:CA:Commit}, 
do so with the same proposal as~$p$. It also implies that all values written to 
$\textit{MR}_V$ are equal to $v$, and hence, that all processes returning 
in line~\ref{line:CA:AdoptVal} adopt $v$ for their output.
\end{proof}

\subsubsection*{The Safe Agreement Abstraction}
Another popular shared-memory abstraction is  \emph{safe agreement}~\cite{BG93b}. 
It is defined through a single operation $\textit{propose}(v)$, 
$v\in V$ (we assume that $V$ is totally ordered). 
The operation returns a value $v\in V$ or a special value $\bot\not\in
V$ so that the following conditions are satisfied:
\begin{itemize}
\item \textbf{Validity:} Every non-$\bot$ output has been previously proposed.
\item \textbf{Agreement:} All non-$\bot$ outputs are identical.
\item \textbf{Non-triviality:} If all participating processes return, then at least one returns a non-$\bot$ value.
\end{itemize} 

Our implementation of safe agreement (Figure~\ref{Alg:SA}) uses two (add-only) sets denoted
$\mathit{In}$ and $\mathit{Out}$ (Section~\ref{sec:ladt}) 
and a max-register $\textit{MR}_V$.
 
The \textit{propose} operation consists of two phases. 
In the first phase (lines~\ref{line:SA:EnterDW}--\ref{line:SA:ExitDW})
that we call the \emph{doorway protocol}, processed first add their identifier to $\mathit{In}$. 
Then processes read $\mathit{MR}_V$, and, if $\bot$ is returned, they then write their proposal to the max-register.
Finally, they exit the doorway by adding their identifier to the $\mathit{Out}$ set. 

In the second phase (lines~\ref{line:SA:CheckDW}--\ref{line:SA:occupiedDW}), 
processes first read the $\mathit{In}$ and $\mathit{Out}$ sets.
If the two sets match, then processes read and return the content of the max-register.
Otherwise, the special value $\bot$ is returned. 

Intuitively, the doorway protocol is used to ensure that processes can check 
if a process might be poised to write to the max-register.
The second phase consists of returning the max-register content if no process may still write to it.

\begin{figure}
\hrule \vspace{1mm}
 {\small
\setcounter{linenumber}{0}
\begin{tabbing}
 bbbb\=bbbb\=bbbb\=bbbb\=bbbb\=bbbb\=bbbb\=bbbb \=  \kill

\textbf{operation} \textit{propose}$(v)$\\
\nnll\label{line:SA:EnterDW}\> $\mathit{In}.\textit{addSet}(\textit{id})$  \commentline{enter the doorway}\\
\nnll\label{line:SA:Prop}\>  \textbf{if} $\textit{MR}_V.\textit{readMax}=\bot$ \textbf{then} $\textit{MR}_V.\textit{writeMax}(v)$ \commentline{write proposal if empty}\\
\nnll\label{line:SA:ExitDW}\>  $\mathit{Out}.\textit{addSet}(\textit{id})$  \commentline{exit the doorway}\\

\nnll\label{line:SA:CheckDW}\> \textbf{if} $\mathit{In}.\textit{readSet}=\mathit{Out}.\textit{readSet}$ \textbf{then} \commentline{check doorway}\\
\nnll\label{line:SA:RetMaxRegister}\>\> \textbf{return}  $\textit{MR}_V.\textit{readMax}$\\
\nnll\label{line:SA:occupiedDW}\> \textbf{else}\\
\nnll\label{line:SA:abort}\>\> \textbf{return}  $\bot$  
\end{tabbing}
}
\vspace{-1.5mm}
 \hrule
\caption{Safe agreement implementation using L-ADTs for process with identifier $\textit{id}$.}

\label{Alg:SA}
\end{figure}

\begin{theorem}
Algorithm in Figure~\ref{Alg:SA} implements safe agreement. 
\end{theorem}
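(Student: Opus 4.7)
The plan is to verify the three required properties of safe agreement (validity, agreement, non-triviality) in turn. Throughout, I would argue in terms of the linearization points of the underlying L-ADT calls, using the linearizable implementations guaranteed by Theorem~\ref{th:crdt}. For a process $p$ executing $\textit{propose}$, let $\tau_p^{\textit{addIn}}$, $\tau_p^{\textit{read}}$, $\tau_p^{\textit{write}}$ (if present), $\tau_p^{\textit{addOut}}$, $\alpha_p$, $\beta_p$, $t_5^p$ denote the linearization points of the calls at lines~\ref{line:SA:EnterDW}, the \textit{readMax} inside~\ref{line:SA:Prop}, the \textit{writeMax} inside~\ref{line:SA:Prop}, ~\ref{line:SA:ExitDW}, the \textit{readSet} of $\mathit{In}$, the \textit{readSet} of $\mathit{Out}$, and the final \textit{readMax} at line~\ref{line:SA:RetMaxRegister}, respectively. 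Validity is nearly immediate: a non-$\bot$ return comes from a \textit{readMax} on $\textit{MR}_V$, and the only writes to $\textit{MR}_V$ pass the local proposal as argument at line~\ref{line:SA:Prop}; hence any returned value is some previously proposed $v$.

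The core work is agreement. The key lemma I would prove is: \emph{if $p$ returns non-$\bot$, then every write to $\textit{MR}_V$ linearizes before $t_5^p$.} First, since $p$ returned non-$\bot$, some write to $\textit{MR}_V$ exists; let $\tau_{\mathit{fw}}$ be the time of the first one. A small case analysis (either $p$ itself wrote, in which case $\tau_p^{\textit{write}} < \alpha_p$ by the program order, or $p$ skipped the write because $\tau_p^{\textit{read}}$ already returned non-$\bot$, hence $\tau_{\mathit{fw}} < \tau_p^{\textit{read}} < \alpha_p$) establishes $\tau_{\mathit{fw}} < \alpha_p$. Now take any writer $q$: since $q$'s \textit{readMax} returned $\bot$, we have $\tau_q^{\textit{read}} < \tau_{\mathit{fw}} < \alpha_p$, whence $\tau_q^{\textit{addIn}} < \alpha_p$, so $q \in I_p$. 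Using the check $I_p = O_p$ at line~\ref{line:SA:CheckDW}, $q \in O_p$, so $\tau_q^{\textit{addOut}} < \beta_p$, and thus $\tau_q^{\textit{write}} < \beta_p < t_5^p$. Agreement follows: for two processes $p, p'$ both returning non-$\bot$ with $t_5^p \le t_5^{p'}$, by the lemma no write lies in $[t_5^p, t_5^{p'}]$, so both \textit{readMax} calls return the same maximum value.

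For non-triviality, assume every participating process returns. Let $q$ be the participant whose $\alpha_q$ is maximal. For any other participant $r$, program order and the choice of $q$ give $\tau_r^{\textit{addIn}} < \alpha_r \le \alpha_q$ and $\tau_r^{\textit{addOut}} < \alpha_r \le \alpha_q < \beta_q$. Hence $I_q$ and $O_q$ both equal the set of all participant ids, so $q$ passes the check at line~\ref{line:SA:CheckDW}. Moreover, the first participant to execute the \textit{readMax} at line~\ref{line:SA:Prop} observes $\bot$ and writes its proposal, so $\textit{MR}_V$ is non-empty at $t_5^q$, and $q$ returns a non-$\bot$ value.

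The main obstacle is the lemma for agreement: one must carefully track the relative order of the linearization points of independent L-ADT operations coming from different processes, and the argument pivots on the (not obvious) fact that the very first write to $\textit{MR}_V$ already occurs before $\alpha_p$, which is what allows the add-only set $\mathit{In}$ to ``catch'' every would-be writer. Once this lemma is in hand, the remaining verifications are routine.
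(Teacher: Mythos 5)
Your proof is correct and follows essentially the same route as the paper's: validity is immediate, agreement hinges on the time of the first write to $\textit{MR}_V$ and the fact that every would-be writer has already entered the doorway by then (so a successful $\mathit{In}=\mathit{Out}$ check certifies all writes have completed), and non-triviality picks a process that performs its doorway check after everyone has exited. Your version is merely a more explicit formalization via linearization points, and your choice of the process with the latest $\mathit{In}$-read (versus the paper's last process to exit the doorway) is an inessential variation.
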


\begin{proof}
The validity property is trivially satisfied as any non-$\bot$
returned value must be an input value written to the max-register in line~\ref{line:SA:Prop}.

Let us now show that the agreement property is also verified. 
Let $\tau$ be the time at which the max-register $\textit{MR}_V$ is first written to.
The set of processes writing to $\textit{MR}_V$, let us call it~$S$, thus corresponds to 
the set of processes that already checked their test in line~\ref{line:SA:Prop} at time~$\tau$.
At time $\tau$ the set object $\mathit{In}$ already contains $S$ and no process reached 
line~\ref{line:SA:CheckDW} yet to read the $\mathit{In}$ set. 
Therefore, all reads to $\mathit{In}$ contains $S$. 
Now let us assume that some process~$p$ returns with a non-$\bot$ output, 
hence, in line~\ref{line:SA:RetMaxRegister}. 
Therefore, $p$ must have successfully passed the test in line~\ref{line:SA:CheckDW} 
and so with a read of the set object $\mathit{Out}$ containing $S$. 
Hence, after all processes that may write to $\textit{MR}_V$ already exited the doorway. 
Therefore, $p$ returns the final state reached by $\textit{MR}_V$ during the execution. 
Thus, non-$\bot$ outputs are all equal to the final state of $\textit{MR}_V$.

For the non-triviality property, consider an execution in which all participating processes return 
and let $p$ be the last process to write its identifier to the $\textit{Out}$ set (line~\ref{line:SA:ExitDW}). 
At that time, the $\textit{In}$ and $\textit{Out}$ sets are both equal to the set of participating processes. 
Thus, $p$ returns the final state of $\mathit{MR}_V$ (line~\ref{line:SA:RetMaxRegister}).
Now, let us assume by contradiction that $\bot$ is the final state of~$\mathit{MR}_V$. 
Hence, $p$ must have successfully passed the test in line~\ref{line:SA:Prop} and wrote its value to~$\mathit{MR}_V$. 
Therefore, the final state of $\mathit{MR}_V$ is greater than the proposal of $p$ --- A contradiction.
\end{proof}

\section{Related Work}
\label{sec:related}

%
\myparagraph{Lattice agreement.}
Attiya et al.~\cite{lattice-hagit} introduced the (one-shot) lattice agreement abstraction and, in the shared-memory context, described a wait-free reduction of lattice agreement to atomic snapshot.
\ignore{
has been introduced by . Unlike consensus, which is impossible even if one process is faulty, lattice agreement is a decidable decision problem in asynchronous settings, and several wait-free algorithms for shared memory and message-passing systems have been proposed in the literature. Attiya et al. \cite{lattice-hagit} shows the equivalence between lattice agreement and atomic snapshot, proposing for the first time a transformation of a decision problem weaker than consensus, giving in upper bounds on shared objects. As for complexity, the best known solution to the problem has time complexity $\mathcal{O}(n\log{}n)$, where $n$
is  the number of processes taking part to the  computation. 
In message-passing systems, lattice agreement can be solved in asynchronous systems when a majority of processes is correct.
}
Falerio et al.~\cite{gla} introduced the long-lived version of lattice
agreement (adopted in this paper) and described an asynchronous
message-passing implementation of lattice agreement assuming a majority
of correct processes, with $\mathcal{O}(n)$ time complexity (in terms
of message delays) in a system of $n$ processes. 
Our RLA implementation in Section~\ref{sec:implementation} builds upon
this algorithm.  

\myparagraph{CRDT.}
Conflict-free replicated data types (CRDT) were introduced by Shapiro et
al.~\cite{crdt} for eventually synchronous replicated services.
The types are defined using the language of join semi-lattices and
assume that type operations are partitioned in updates and queries.
Falerio et al.~\cite{gla} describe a ``universal'' construction of a
linearizable CRDT from lattice agreement.
Skrzypczak et al.~\cite{crdt-practice} argue that avoiding consensus
in such constructions may bring performance gains. 
In this paper, we consider a more general class of  types (L-ADT) that are ``state-commutative''
but not necessarily ``update-query'' and leverage the recently
introduced criterion of interval-linearizability~\cite{interval-linearizability} for
\emph{reconfigurable} implementations of
L-ADTs using RLA.   

\ignore{
Generalized lattice agreement is then used to build a replicated state machine for data structures supporting commutative operations as conflict-free replicated data types (CRDTs \cite{crdt}). In CRDTs, all operations commute such that they can be concurrently executed without coordination. Notably\cite{gla} shows that generalized lattice agreement on top of CRDTs (initially introduced for eventual consistency) allows any two replicated states to be comparable, guaranteeing linearizability \cite{HW90}.
}


\myparagraph{Reconfiguration.}
\ignore{
The problem of object reconfigurability has been widely explored in
recent years. Many works in this area focused on emulations of
read/write registers. Those works can be classified depending on the
type of reconfiguration assumed.
}
\textit{Passive reconfiguration}~\cite{BaldoniBKR09,AttiyaCEKW19}
assumes that replicas enter and leave the system under an explicit
\textit{churn model}:
if the churn assumptions are violated, consistency is not guaranteed.
In the \textit{active reconfiguration} model, processes explicitly propose
configuration updates, e.g., sets of new process members. 
Early proposals, such as RAMBO~\cite{rambo} focused on read-write
storage services and used consensus to ensure that the clients agree
on the evolution of configurations. 

Recent solutions~\cite{dynastore,parsimonious,smartmerge,freestore,SKM17-reconf} 
propose an asynchronous reconfiguration by replacing consensus with weaker abstractions 
capturing the minimal coordination required to safely modify the system configuration. 
Moreover, Freestore~\cite{freestore} proposes a modular solution to derive 
interchangeable consensus-based and asynchronous reconfiguration.

\myparagraph{Asynchronous reconfiguration.}
Dynastore~\cite{dynastore} was the first solution emulating a
reconfigurable atomic read/write register without consensus:
clients can asynchronously propose  incremental additions or removals
to the system configuration.
Since proposals commute, concurrent proposals are collected together
without the need of deciding on a total order.
Assuming $n$ proposals, a Dynastore client
might, in the worst case, go through $2^{n-1}$ candidate configurations before converging to
a final one.
%
\ignore{
Dynastore assures atomicity in all runs and liveness only if the
number of configurations is finite and the set of crashed processes
and processes whose removal is pending is a minority in the current
configuration and in any concurrent proposal\footnote{Note that these
liveness conditions are necessary in the case of active
reconfiguration \cite{SpiegelmanKM17}.}.
}
Assuming a run with a total number of configurations $m$, complexity is
$\mathcal{O}(min(mn, 2^{n}))$. 

SmartMerge~\cite{smartmerge} allows for reconfiguring not only the
system membership but also its quorum system, excluding possible
undesirable configurations.
SmartMerge brings an interesting idea of using 
an external reconfiguration service based on lattice
agreement~\cite{gla},
which allows us to reduce the number of traversed
 configurations to $O(n)$.
However, this solution assumes that this ``reconfiguration lattice''  is  always available and
 non-reconfigurable (as we showed in this paper, lattice agreement is a powerful
 tool that can itself be used to implement a large variety of
 objects).

Gafni and Malkhi \cite{parsimonious} proposed the \textit{parsimonious
speculative snapshot} task based on the commit-adopt
abstraction~\cite{Gaf98}.
Reconfiguration,  built on top of the proposed abstraction, has
complexity  $\mathcal{O}(n^2)$: $n$ for the traversal and $n$ for the
complexity of the parsimonious speculative snapshot implementation.
%
%
Spiegelman, Keidar and Malkhi \cite{SKM17-reconf} improved this work
by proposing an optimal solution with time complexity $\mathcal{O}(n)$ by 
obtaining an amortized (per process) time complexity $\mathcal{O}(1)$ for speculative snapshots operations.

\section{Concluding Remarks}
\label{sec:conc}

To conclude, let us briefly discuss the complexity of our solution to the reconfiguration problem 
and give an overview of how our solution could be further extended. 

\myparagraph{Round-trip complexity.}
The main complexity metric considered in the literature 
is the maximal number of communication round-trips needed to complete a reconfiguration 
when $c$ operations are concurrently proposed. In the worst case, each time a 
round of requests is completed in our algorithm, a new state is affecting
$T_p$ or $\mathit{obj}_p$,  and hence we have at most $c$
round-trips.
Note that a round might be interrupted by receiving a greater 
committed state at most $c$ times as committed states are 
totally ordered joins of proposed states.
We are aware of only one other optimal solution with linear 
round-trip complexity, proposed Spiegelman et al.~\cite{SKM17-reconf}.
In their solution, the maximal number of round-trips is at least $4c$,
twice more than ours. 
This has to do with the use of a shared memory simulation preventing to read and write 
at the same time and preventing from sending requests to distinct configurations in parallel.
Moreover, they also use a similar interruption mechanism.

Querying multiple configurations at the same time might increase the round-trip delay 
as we need to wait for more responses. Still, we believe that when the number of requests 
scales with a constant factor, this impact is negligible.

\myparagraph{Message complexity.}
%
As in earlier solutions, messages are of linear size in the number of  distinct proposed configurations 
or collect operations on the implemented object.

The number of exchanged messages depends on 
the configuration lattice.
With at most~$k$ members per configuration, 
each client may send at most $k*2^n$ messages per round as there are,
in the worst case,  exponentially many configurations to query.
But this upper bound may be reached only if 
joins of proposed configurations do not share replicas.
%
%
We expect, however, that in most cases the concurrently proposed
configurations have large overlaps: configuration updates are
typically gradual. 
For example, when a configuration is defined as a set of updates (added and
removed replicas),  clients may send 
at most $k+\Delta*n$ requests per round, where $\Delta$
the number of replicas added per proposal. 
For small $\Delta$, 
the total number of messages is of order $k$.

%
An interesting question is whether we can construct a composite
complexity metric that combines the number of messages a process sends
and the time it takes to complete a \textit{propose} operation.
Indeed, one may try to find dependencies between accessing few
configurations sequentially versus accessing many configurations in parallel.

\myparagraph{Complexity trade-offs.}
If the cost of querying many configurations in parallel outweigh the cost of contacting fewer configurations sequentially, 
we can use the approach from~\cite{SKM17-reconf}. 
Intuitively, it boils down to solving an instance of generalized lattice agreement on
the configurations and then querying the produced configurations,
there can be $O(c)$ of them, where $c$ is the number of concurrently
proposed configurations.

\ignore{
A lighter modification to the RLA protocol may consist in assuming timing 
constraints and to wait sufficiently long until most responses are
received,
while waiting for responses from quorums only when no new information
is received and an operation may return.
Such a modification may yield efficiency gains in practice, as clients should be less constrained 
by slow responses while increasing the number of distinct inputs expected to discover 
per round.
\textbf{PK: did not get this one}
}

Objects with ``well-structured''
its lattices can be implemented very efficiently.
Take, for example, the totally ordered lattice of a max-register.
In this case, processes can directly return the state stored in $\mathit{LearnLB}$ in 
line~\ref{line:rgla2:setAdoptLB}.
Indeed, not returning a committed state might 
only violate the \emph{consistency} property. But if states are
already totally ordered, 
then the \emph{consistency} property always holds.
Therefore, in the absence of reconfiguration calls, operations can return in a single round trip. 
It is in general interesting to investigate how the lattice structure 
might be leveraged.

\def\noopsort#1{} \def\No{\kern-.25em\lower.2ex\hbox{\char'27}}
  \def\no#1{\relax} \def\http#1{{\\{\small\tt
  http://www-litp.ibp.fr:80/{$\sim$}#1}}}





\begin{thebibliography}{10}

\bibitem{AADGMS93}
Y.~Afek, H.~Attiya, D.~Dolev, E.~Gafni, M.~Merritt, and N.~Shavit.
\newblock Atomic snapshots of shared memory.
\newblock {\em J.~ACM}, 40(4):873--890, 1993.

\bibitem{dynastore}
M.~K. Aguilera, I.~Keidar, D.~Malkhi, and A.~Shraer.
\newblock Dynamic atomic storage without consensus.
\newblock {\em J. {ACM}}, 58(2):7:1--7:32, 2011.

\bibitem{freestore}
E.~Alchieri, A.~Bessani, F.~Greve, and J.~da~Silva~Fraga.
\newblock Efficient and modular consensus-free reconfiguration for
  fault-tolerant storage.
\newblock In {\em OPODIS}, pages 26:1--26:17, 2017.

\bibitem{max-register}
J.~Aspnes, H.~Attiya, and K.~Censor.
\newblock Max registers, counters, and monotone circuits.
\newblock In {\em PODC}, pages 36--45, 2009.

\bibitem{AE14}
J.~Aspnes and F.~Ellen.
\newblock Tight bounds for adopt-commit objects.
\newblock {\em Theory Comput. Syst.}, 55(3):451--474, 2014.

\bibitem{ABD95}
H.~Attiya, A.~Bar-Noy, and D.~Dolev.
\newblock Sharing memory robustly in message passing systems.
\newblock {\em J.~ACM}, 42(2):124--142, 1995.

\bibitem{AttiyaCEKW19}
H.~Attiya, H.~C. Chung, F.~Ellen, S.~Kumar, and J.~L. Welch.
\newblock Emulating a shared register in a system that never stops changing.
\newblock {\em {IEEE} Trans. Parallel Distrib. Syst.}, 30(3):544--559, 2019.

\bibitem{lattice-hagit}
H.~Attiya, M.~Herlihy, and O.~Rachman.
\newblock Atomic snapshots using lattice agreement.
\newblock {\em Distributed Comput.}, 8(3):121--132, 1995.

\bibitem{BaldoniBKR09}
R.~Baldoni, S.~Bonomi, A.~Kermarrec, and M.~Raynal.
\newblock Implementing a register in a dynamic distributed system.
\newblock In {\em ICDCS}, pages 639--647, 2009.

\bibitem{BG93b}
E.~Borowsky and E.~Gafni.
\newblock Generalized {FLP} impossibility result for $t$-resilient asynchronous
  computations.
\newblock In {\em STOC}, pages 91--100, 1993.

\bibitem{rsdp-book2011}
C.~Cachin, R.~Guerraoui, and L.~Rodrigues.
\newblock {\em Introduction to reliable and secure distributed programming}.
\newblock Springer Science \& Business Media, 2011.

\bibitem{interval-linearizability}
A.~Casta{\~{n}}eda, S.~Rajsbaum, and M.~Raynal.
\newblock Unifying concurrent objects and distributed tasks:
  Interval-linearizability.
\newblock {\em J. {ACM}}, 65(6):45:1--45:42, 2018.

\bibitem{CL02}
M.~Castro and B.~Liskov.
\newblock Practical byzantine fault tolerance and proactive recovery.
\newblock {\em ACM Trans. Comput. Syst.}, 20(4):398--461, 2002.

\bibitem{GGKV09}
G.~V. Chockler, R.~Guerraoui, I.~Keidar, and M.~Vukolic.
\newblock Reliable distributed storage.
\newblock {\em {IEEE} Computer}, 42(4):60--67, 2009.

\bibitem{gla}
J.~Faleiro, S.~Rajamani, K.~Rajan, G.~Ramalingam, and K.~Vaswani.
\newblock Generalized lattice agreement.
\newblock In {\em PODC}, pages 125--134, 2012.

\bibitem{FLP85}
M.~J. Fischer, N.~A. Lynch, and M.~S. Paterson.
\newblock Impossibility of distributed consensus with one faulty process.
\newblock {\em J.~ACM}, 32(2):374--382, 1985.

\bibitem{Gaf98}
E.~Gafni.
\newblock Round-by-round fault detectors: Unifying synchrony and asynchrony.
\newblock In {\em PODC}, pages 143--152, 1998.

\bibitem{parsimonious}
E.~Gafni and D.~Malkhi.
\newblock Elastic configuration maintenance via a parsimonious speculating
  snapshot solution.
\newblock In {\em DISC}, pages 140--153, 2015.

\bibitem{quorums}
D.~K. Gifford.
\newblock Weighted voting for replicated data.
\newblock In {\em SOSP}, pages 150--162, 1979.

\bibitem{rambo}
S.~Gilbert, N.~A. Lynch, and A.~A. Shvartsman.
\newblock Rambo: a robust, reconfigurable atomic memory service for dynamic
  networks.
\newblock {\em Distributed Comput.}, 23(4):225--272, 2010.

\bibitem{Her91}
M.~Herlihy.
\newblock Wait-free synchronization.
\newblock {\em ACM Trans. Program. Lang. Syst.}, 13(1):123--149, 1991.

\bibitem{HW90}
M.~Herlihy and J.~M. Wing.
\newblock Linearizability: A correctness condition for concurrent objects.
\newblock {\em ACM Trans. Program. Lang. Syst.}, 12(3):463--492, 1990.

\bibitem{smartmerge}
L.~Jehl, R.~Vitenberg, and H.~Meling.
\newblock Smartmerge: {A} new approach to reconfiguration for atomic storage.
\newblock In {\em DISC}, pages 154--169, 2015.

\bibitem{Lam98}
L.~Lamport.
\newblock The part-time parliament.
\newblock {\em ACM Trans. Comput. Syst.}, 16(2):133--169, 1998.

\bibitem{paxos-reconfigure}
L.~Lamport, D.~Malkhi, and L.~Zhou.
\newblock Reconfiguring a state machine.
\newblock {\em {SIGACT} News}, 41(1):63--73, 2010.

\bibitem{perrinBook}
M.~Perrin.
\newblock Concurrency and consistency.
\newblock In {\em Distributed Systems}. Elsevier, 2017.

\bibitem{Sch90}
F.~B. Schneider.
\newblock Implementing fault-tolerant services using the state machine
  approach: A tutorial.
\newblock {\em ACM Comput. Surv.}, 22(4):299--319, 1990.

\bibitem{crdt}
M.~Shapiro, N.~M. Pregui{\c{c}}a, C.~Baquero, and M.~Zawirski.
\newblock Conflict-free replicated data types.
\newblock In {\em SSS}, pages 386--400, 2011.

\bibitem{crdt-practice}
J.~Skrzypczak, F.~Schintke, and T.~Sch{\"{u}}tt.
\newblock Linearizable state machine replication of state-based crdts without
  logs.
\newblock {\em CoRR}, abs/1905.08733, 2019.

\bibitem{SKM17-reconf}
A.~Spiegelman, I.~Keidar, and D.~Malkhi.
\newblock Dynamic reconfiguration: Abstraction and optimal asynchronous
  solution.
\newblock In {\em DISC}, pages 40:1--40:15, 2017.

\end{thebibliography}
\end{document}